\documentclass[sigconf]{acmart}

\usepackage{amsmath,amssymb,amsfonts}
\usepackage{algorithmic}
\usepackage{graphicx}
\usepackage{grffile}
\usepackage{subcaption}
\usepackage{textcomp}
\usepackage{subcaption}

\usepackage{xcolor}
\usepackage{pifont} 
\usepackage{multirow}
\usepackage{rotating}
\usepackage[table]{xcolor}
\newcommand{\cmark}{\checkmark}
\newcommand{\xmark}{\ding{55}}
\usepackage{threeparttable}
\usepackage{subcaption}
\def\BibTeX{{\rm B\kern-.05em{\sc i\kern-.025em b}\kern-.08em
    T\kern-.1667em\lower.7ex\hbox{E}\kern-.125emX}}

\usepackage{pifont} 

\usepackage[ruled,vlined,linesnumbered]{algorithm2e}
\usepackage{wrapfig}

\let\oldnl\nl
\newcommand{\nonl}{\renewcommand{\nl}{\let\nl\oldnl}}

\usepackage[font=small,skip=3pt]{caption}
\usepackage{subcaption}
\usepackage{booktabs} 
\usepackage{bbm}
\usepackage{amsthm}

\newtheorem{corollary}{Corollary}

\SetKwInput{KwData}{Input}
\SetKwInput{KwResult}{Output}

\usepackage{mathtools}
\newtheorem{lemma}{Lemma}

\usepackage{enumitem}
\usepackage{setspace}
\DontPrintSemicolon

\usepackage{makecell}

\AtBeginDocument{%
  \providecommand\BibTeX{{%
    Bib\TeX}}}

\acmConference[ICS 2026]{Make sure to enter the correct
  conference title from your rights confirmation email}{July 06--09,
  2026}{Belfast, Northern Ireland}

\author{S M Shovan}
\authornote{These authors contributed equally to this work.}
\email{sskg8@mst.edu}
\affiliation{
  \institution{Missouri University of Science and Technology}
  \city{Rolla}
  \state{MO}
  \country{USA}
}

\author{Arindam Khanda}
\authornotemark[1]
\email{akkcm@mst.edu}
\affiliation{
  \institution{Missouri University of Science and Technology}
  \city{Rolla}
  \state{MO}
  \country{USA}
}

\author{S M Ferdous}
\email{sm.ferdous@pnnl.gov}
\affiliation{
  \institution{Pacific Northwest National Laboratory}
  \city{}
  \state{}
  \country{USA}
}

\author{Sajal K. Das}
\email{sdas@mst.edu}
\affiliation{
  \institution{Missouri University of Science and Technology}
  \city{Rolla}
  \state{MO}
  \country{USA}
}

\author{Mahantesh Halappanavar}
\email{hala@pnnl.gov}
\affiliation{
  \institution{Pacific Northwest National Laboratory}
  \city{}
  \state{}
  \country{USA}
}

\newcommand{\stlp}{{\sc StLP}}
\newcommand{\itlp}{{\sc ItLP}}
\newcommand{\dblp}{{\sc DynLP}}
\newcommand{\LZero}{{\sc L_0}}
\newcommand{\LOne}{{\sc L_1}}

\begin{document}

\title{DynLP: Parallel Dynamic Batch Update for Label Propagation in Semi-Supervised Learning \\ \large{(to be published in the ACM International Conference on Supercomputing (ICS 2026))}}

\begin{abstract}
 Semi-supervised learning aims to infer class labels using only a small fraction of labeled data. In graph-based semi-supervised learning, this is typically achieved through label propagation to predict labels of unlabeled nodes. 
  However, in real-world applications, data often arrive incrementally in batches. Each time a new batch appears, reapplying the traditional label propagation algorithm to recompute all labels is redundant, computationally intensive, and inefficient. To address the absence of an efficient label propagation update method, we propose \dblp{}, a novel GPU-centric Dynamic Batched Parallel Label Propagation algorithm that performs only the necessary updates, propagating changes to the relevant subgraph without requiring full recalculation. By exploiting GPU architectural optimizations, our algorithm achieves on average $13 \times$ and upto $102 \times$ speedup on large-scale datasets compared to state-of-the-art approaches.
\end{abstract}



\keywords{semi-supervised learning, label propagation, GPU}


\maketitle
\begin{center}

\end{center}

\section{Introduction}

In many machine learning applications, data evolve over time, and there is a need to analyze such dynamic data efficiently without recomputing results from scratch. These tasks become even more challenging when the training data are limited or expensive to obtain, as labeling often requires human input, sometimes from domain experts. Many learning tasks, including sentiment analysis on text, web page categorization, speech analysis, and medical image classification, fall into this dynamic and sparsely labeled data category. In such scenarios, conventional supervised learning is inadequate as it relies on a large amount of labeled data.

Semi-supervised learning (SSL)~\cite{chapelle2006semi} models address this imbalance between labeled and unlabeled data. Among SSL methods, graph-based approaches have gained popularity due to their accuracy and computational efficiency \cite{subramanya2014graph,CHONG2020216,song2022graph}. However, most existing studies focus on static data, overlooking the inherent dynamism in many learning settings. In this paper, we study graph-based semi-supervised learning (GSSL) for dynamic data. Specifically, we design and implement efficient parallel algorithms for label propagation, a widely used inference approach for GSSL, on evolving graphs.

A typical GSSL framework consists of two key phases~\cite{zhu2003semi}: (1) graph construction from data and (2) label inference on the constructed graph. For datasets that are not naturally represented as graphs (e.g., collections of images), a similarity graph is first constructed—most commonly using neighborhood-based methods such as $k$-nearest neighbors (kNN). The label inference phase then predicts the labels of unlabeled nodes using the few labeled ones (seed nodes). Label propagation (LP) and its variants are among the earliest~\cite{zhu2003semi,zhou2004learning} and most widely used methods~\cite{song2022graph} for this task. LP solves a quadratic optimization problem involving the graph Laplacian. Although an analytical solution exists, it is computationally expensive due to the dense matrix operations involved. A more scalable alternative adopts an iterative random-walk-like process, which we call \itlp{}: starting from known labels for seed nodes and arbitrary initial labels for unlabeled ones, it iteratively updates each node’s label as the average of its neighbors’ labels while keeping the labeled nodes fixed. This iterative approach is guaranteed to converge to the analytic solution. Beyond GSSL, LP methods are also used in applications such as community detection~\cite{raghavan2007near} and in augmenting graph neural networks (GNNs)~\cite{wang2021combining}.

In this paper, we adopt a dynamic algorithm model, where batches of data changes (i.e., graph node addition and deletion) arrive, and the goal is to maintain accurate labels for all vertices as the graph evolves. Since each batch of nodes is inserted/deleted at discrete time steps, parallel computing can be leveraged to accelerate updates. A naive approach would be to simply augment the neighborhood graph (e.g., via kNN or $\epsilon$-neighborhood construction) and rerun the entire LP procedure from scratch. Wagner et al.~\cite{wagner2018semi} proposed a streaming algorithm for temporal label propagation (\stlp{}), which can be adapted to our dynamic setting. Their approach uses a graph reduction technique inspired by the short-circuit operator in electrical networks to improve memory efficiency. However, when adapted to our setting, we show that \stlp{} still requires dense matrix multiplications after every batch updates, making it unsuitable for large-scale datasets.

Our proposed Dynamic Batch Parallel Label Propagation (\dblp{}) algorithm improves upon \itlp{} and \stlp{} by maintaining connected components across batches of nodes and performing iterative label propagation on a reduced graph representation. We compare \dblp{} with our parallel implementations of \itlp{} and\stlp{} on diverse synthetic and real-world datasets and the scalability grows proportional to the dataset size and the number of batches.

Below, we summarize our contributions:
\begin{itemize}
\vspace{-0.05in}
\item We provide \stlp{}, a GPU parallel implementation of the streaming label propagation algorithm of~\cite{wagner2018semi}.
\item We develop \dblp{}, a novel, dynamic parallel label propagation algorithm that supports both insertions and deletions of data points. It addresses the inefficiencies of \stlp{} using connected component-based efficient update while avoiding redundant operations. 
\item We implement \dblp{} on GPUs and compare it against \stlp{} and \itlp{} in terms of speedup, convergence rate and accuracy in both real and synthetic sparse datasets.
\item Our results show that \dblp{} achieves, on average, a 13$\times$ speedup (up to 102$\times$) over state-of-the-art iterative methods, while being up to 100$\times$ more memory efficient than harmonic-solution-based approaches for sparse graphs.
\end{itemize}

\vspace{-0.1in}
\section{Related work}
We review label propagation techniques for graph-based semi-supervised learning on static and dynamic cases. 

\vspace{-0.05in}
\subsection{Static} 
In the static case, labels are inferred only for the unlabeled nodes already present in the graph~\cite{song2022graph}.

Zhu et al.~\cite{zhu2003semi} model the label function over the graph as a Gaussian random field, where nearby nodes are encouraged to have similar label values. They show that the mean of this field minimizes a quadratic energy function defined by the graph Laplacian (detailed in \S~\ref{sec:prelim}). The resulting harmonic function has a closed-form solution involving the inverse of a submatrix of the Laplacian, which can alternatively be computed through iterative label propagation methods, which is the focus of our paper.
Zhu et al. first formulated the problem with a Gaussian fields-based objective function \cite{zhu2003semi}. In a subsequent study \cite{zhu2003semib}, they showed that this objective can be written in combinatorial Laplacian form, enabling optimal minimization. However, the resulting solution required $O(n^2)$ space, motivating later work on more efficient alternatives. 
The direct analytical method can be accelerated using low-rank matrix approximations of the graph \cite{delalleau2005efficient}, while other works employ quantization based on cluster centroids \cite{valko2012online}. Later, researchers such as \cite{li2016graph} further improved label-propagation performance by enhancing the quality of the underlying graph structure. Sparse-graph construction using principal component analysis has also been explored in studies such as \cite{hua2022robust}.
More recently, researchers have incorporated random-walk-based label-propagation techniques, in which labeled nodes are treated as absorbing states under the assumption that no alternative paths are explored once a labeled node is reached \cite{desmond2021semi}. This family of approaches has been extended using guided random walks \cite{fazaeli2022guidedwalk}, allowing the exploration of additional paths associated with the same class. Random-walk-based label propagation has also been applied to graph-clustering tasks \cite{song2023instance}. Although rank-reduction strategies and absorbing random-walk formulations provide modest improvements in scalability, they often sacrifice guarantees on solution quality. Moreover, none of these approaches is designed for sparse graphs, nor do they parallelize computation to exploit modern high-performance architectures.

\vspace{-0.05in}
\subsection{Dynamic}  
In real-world applications, unlabeled nodes often arrive incrementally over time, either as a continuous stream or in multiple batches, which constitutes an inductive scenario. Recent studies have explored various forms of dynamicity, such as the arrival of new samples with distinct feature spaces~\cite{wu2023online, din2020online} or the emergence of entirely new class labels~\cite{zhu2020semi}.

Zhu et al.~\cite{zhu2009somed} proposed an online solution with a no-regret formulation, where the cumulative error propagated from batch to batch remains close to zero. Sindhwani et al.~\cite{sindhwani2005beyond} addressed a related challenge in which the unlabeled nodes need not follow the same distribution as the labeled ones. Huang et al.~\cite{huang2015online} applied label propagation for dataset annotation, where label prediction was restricted to the current batch of unlabeled nodes only.

However, most of these methods suffer from scalability issues due to their $O(n^2)$ space complexity. Many methods assume that incoming vertices have known similarities to all existing vertices, which is unrealistic when many pairwise similarities are unknown. In such sparse settings, the inherent $O(n^2)$ space complexity restricts scalability~\cite{duff1988sparsity}.
Later, Ravi et al.~\cite{ravi2016large} 
proposed an approximate solution, which does not leverage current predictions to infer future labels and scales linearly with the number of vertices. 

To mitigate the memory bottleneck, Wagner et al.~\cite{wagner2018semi} introduced a short-circuiting approach that represents each ground-truth class using only two representative nodes without information loss. Nevertheless, for $n$ unlabeled vertices, the memory requirement remains $O(n^2)$, which is high, especially when labeled nodes are far fewer than unlabeled ones.

Recent approaches employ graph neural networks (GNNs) to handle unreliable or evolving nodes by leveraging implicit semantic relationships~\cite{yu2024gnn}. Prototype-based learning methods summarize streams of unlabeled nodes into representative prototypes with varying levels of granularity~\cite{din2024reliable}. Other approaches use generative feature models~\cite{li2019incremental}, rather than simple similarity scores, to propagate labels with the goal of capturing latent feature representations. Although these machine-learning-based methods introduce promising ideas, their training and inference times often struggle to scale in highly dynamic environments where changes occur frequently. Furthermore, many of these models require a substantial amount of ground-truth labels to achieve strong predictive performance, making them unsuitable for scenarios where only a small fraction of labeled data is available.


The study by Bunger et al.~\cite{bunger2022empirical} empirically compared various time series distance measures under the assumption that the underlying graph is fully connected, overlooking scenarios in which similarities between certain vertex pairs may be unknown or undefined. This assumption limits scalability and disregards the feasibility of performing label propagation on sparse graphs~\cite{van2020survey}.
Due to the absence of scalable label propagation algorithms, Song et al.~\cite{song2022graph} emphasized the need for parallelization in future research, noting that recent approaches should achieve time complexity that scales linearly with the number of samples. Only a few studies, such as Covell et al.~\cite{covell2013efficient}, have attempted to parallelize label propagation. However, their approach targets static graphs in which the labels of both known and unknown samples are already available. When applied to multi-batch settings, such methods typically recompute the entire process for each batch of changes, resulting in redundant computation. 
The lack of scalable label propagation approaches on large-scale sparse dynamic graphs motivates us to design \dblp{}. 


\color{black}
\vspace{-0.05in}
\section{Preliminaries}
\label{sec:prelim}
We model data as a weighted, undirected graph
$G = (V, E)$, where $V$ is the vertex set denoting the data points and $E \subseteq V \times V$ is the edge set denoting the similarity among the vertices. 
Let $\mathcal{N}(u) = \{ v \in V : w_{u,v} > 0 \}$ denote the neighbor set of $u$, where $w : V \times V \to \mathbb{R} \ge 0$ is a similarity function on vertex pairs that assigns a similarity value as the edge weight. The Laplacian of the graph $G$ is defined as $\mathbf{L}=\mathbf{D}-\mathbf{W}$, where $\mathbf{D}$ and $\mathbf{W}$ are the degree and weight matrices, respectively.

Given a graph $G$ with a small labeled subset $V^L \subseteq V$ and unlabeled nodes $V^U = V \setminus V^L$, and a similarity function $w$ on vertex pairs, semi-supervised learning infers labels for all nodes in $V = V^L \cup V^U$. Here, we consider a binary classification setup with labels $1$ and $0$. 


\subsection{Label Propagation} 
The simplest yet effective way to infer labels for unlabeled nodes is \emph{label propagation}~\cite{zhu2003semi}. It enforces smoothness iteratively: adjacent vertices with large edge weight should have similar label scores, while labels on $V^L$ remain fixed. At each iteration, every unlabeled node $u$ replaces its label $F_u$ with the average of its neighbors’ labels (unweighted graph) or the weighted average (weighted graph). 
Except for $u \in V^L$ that have the ground-truth labels $Y_u$, all the nodes that appear from time $1$ to $t$ are iteratively updated until convergence using the following equation.
\begin{equation}
\label{eq:label_prop}
F_u^{(k+1)}=
\begin{cases}
\frac{1}{d(u)}\sum_{v\in V} w(u,v)F_{v}^{(k)},\quad \forall u\in V \setminus V^L,\\
Y_u, \quad \forall u\in V^L,
\end{cases}
\end{equation}
where $d(u)=\sum_{v\in V} w(u,v)$ and $k$ denotes the iteration identifier.
In graph-based label propagation for binary classification, the harmonic solution also can be defined as a function $F: V \to \mathbb{R}^2$ that matches the given labels $Y_u$ on $u \in V^L$, and minimizes the energy function $\frac{1}{2} \sum_{(u, v) \in E} w_{u,v}\big(F_u - F_{v}\big)^2$~\cite{zhu2003semi}. The closed form formula for the unknown part of the solution can be derived as:
\begin{equation}
    F^U = -\mathbf{L}_{UU}^{-1}\mathbf{L}_{UL}F^L,
\end{equation}
where $\mathbf{L} =\begin{pmatrix}
\mathbf{L}_{LL} & \mathbf{L}_{LU}\\
\mathbf{L}_{UL} & \mathbf{L}_{UU}
\end{pmatrix}$ is the graph Laplacian after indexing vertices as labeled $L$ first, then unlabeled $U$.


\subsection{Dynamic Graphs and Temporal Label Propagation} 
A dynamic graph $G_t = (V_t, E_t)$ models data that evolves over time. At discrete time $t$, new data may appear as vertices $\Delta^{Ins}_t$, such that $V_{t+1} = V_t \cup \Delta^{Ins}_t$. On the other hand, some data  can become irrelevant and can be considered as deleted vertices $\Delta^{Del}_t$. Together, we denote all the changed vertices as $\Delta_t = \{\Delta^{Ins}_t, \Delta^{Del}_t\}$. Typically, $\Delta_t$ contains few or no labeled vertices. 
Therefore, semi-supervised learning aims to infer vertex labels at time step $t+1$ using the labeled vertices $V^L \subseteq V_{t+1}$, the unlabeled vertices $V^U = V_{t+1} \setminus V^L$, and the similarity function $w$. Note that labels of vertices with ground truth remain constant over time, whereas labels of vertices without ground truth may change due to the influence of newly appeared/disappeared vertices. 
Restricting $V^L$ to vertices with ground truth only, the task of predicting labels for the unlabeled vertices at time $t+1$ reduces to finding the harmonic solution on the updated graph $G_{t+1}$. Algorithm~\ref{alg:static} follows this approach. It first deletes vertices $u \in \Delta^{Del}_t$ and related edges from the existing graph. Then it adds vertices $u \in \Delta^{Ins}_t$ to $V_{t+1}$ and edges ${(u,v) : v \in V_{t+1}, u \in \Delta^{Ins}_t}$ to $E_{t+1}$. Finally the algorithm recomputes the harmonic solution $F^U$ for all unlabeled vertices $V^U = V_{t+1} \setminus V^L$, where $V^L$ is the set of vertices with ground truth.

\begin{algorithm}[h]
\caption{Label Recomputation}
\label{alg:static}
\DontPrintSemicolon
\small
\KwIn{Similarity graph $G_t = (V_t, E_t)$, label vector $F^L$ for vertices with ground truth, batch of new vertices $\Delta_t = \{\Delta^{Ins}_t, \Delta^{Del}_t\}$
}
\KwOut{Updated label vector $F^U$ for unlabeled vertices.}
\tcc{Step 1: Update Graph}
Initialize $V_{t+1} \gets V_t, \quad E_{t+1} \gets E_t$\;
\For{$u \in \Delta^{Del}_t$}{
    \For{$v \in V_{t+1}$}{
            Delete edge $(u,v)$ from $E_{t+1}$\;
    }
    $V_{t+1} \gets V_{t+1} \setminus u$\;  
}
\For{$u \in \Delta^{Ins}_t$}{
    \For{$v \in V_{t+1}$}{
            Add edge $(u,v)$ with weight $w_{v,u}$ into $E_{t+1}$\;
    }
    $V_{t+1} \gets V_{t+1} \cup u$\;  
}
\tcc{Step 2: Update labels}
$\mathbf{L} = \begin{bmatrix} \mathbf{L}_{LL} & \mathbf{L}_{LU} \\ \mathbf{L}_{UL} & \mathbf{L}_{UU} \end{bmatrix} \gets \mathbf{D}_{t+1} - \mathbf{W}_{t+1}$ \tcp{Compute Laplacian}
$F^U \gets -\mathbf{L}_{UU}^{-1}\mathbf{L}_{UL}F^L$ \tcp{Compute Harmonic Solution}
\end{algorithm}

Although this approach is straightforward, it becomes computationally inefficient on large incremental graphs.
Restarting propagation after each new batch forces all previously processed nodes to participate in every iteration and allows any previously labeled node (without ground truth) to change its label due to the new nodes, leading to rapidly growing computation time and memory.


To avoid this issue, the \textit{short circuiting} method is proposed for streaming incremental graphs~\cite{wagner2018semi}. It reduces dimensionality by contracting all vertices with the same ground truth label into one representative node per class. For each representative, its edges to other vertices are replaced by the parallel edge sum. This compact graph preserves the information and enables memory-efficient temporal label propagation.

Although effective on dense graphs, this method relies on the Laplacian and its inverse. Since the inverse of a sparse matrix is typically dense~\cite{ponte2024computing}, the approach cannot exploit sparse linear algebra and is neither scalable nor well-suited to large real-world graphs, which are usually sparse.
Second, for a set of changes $\Delta_t$, the method recomputes labels for all vertices in $V^U$ from scratch.
However, in practice, influence decays with propagation, so changed vertices in $\Delta_t$ affect only nodes within a limited number of hops.
These motivate us to design a parallel label update approach that avoids redundant recomputation of the harmonic solution.

\begin{algorithm}[!h]
\caption{\dblp{} Update}
\label{alg:proposed}
\DontPrintSemicolon
\small
\KwIn{Similarity graph $G_t = (V_t, E_t)$, label vector $F^L = \{F^{\LZero{}}, F^{\LOne{}}\}$ for vertices with ground truth, label vector $F^U_t$ for the vertices in $V_t$ without ground truth,  batch of new vertices $\Delta_t = \{\Delta^{Ins}_t, \Delta^{Del}_t\}$.
}
\KwOut{Updated label vector $F^U_{t+1}$ for all vertices in $V_{t+1}$ without ground truth.}

\tcc{Step 1: Change Adjustment and Sparsification}
Initialize $V_{aff} \gets \emptyset$\label{code:del_aff_start}\;
Initialize $V_{t+1} \gets V_t, \quad E_{t+1} \gets E_t$\;

\For{$u \in \Delta^{Del}_t$}{
    $V_{aff} \gets V_{aff} \cup \mathcal{N}(u)$\;
    Remove $u$ from $V_{t+1}$ and related edges from $E_{t+1}$\;
} \label{code:del_aff_end}
Initialize an empty graph $G'(V', E')$\label{code:graph_cont_start}\;
$V' \gets \Delta^{Ins}_t$\label{code:graph_cont_mid1}\;
\For{$u \in \Delta^{Ins}_t$}{
    
    \For{$v \in V_{t+1}$}{
        \If{$Sim(v,u) > \tau$}{
            Add edge $(u,v)$ with weight $Sim(u,v)$ into $E_{t+1}$\;
            \If{$v \in \Delta^{Ins}_t$}{
                Add edge $(u,v)$ with weight $Sim(u,v)$ into $E'$\;
            }
        }
    }
    $V_{aff} \gets V_{aff} \cup \{u\} \cup \mathcal{N}(u)$\;
}
$\mathcal{C} \gets $ \texttt{FindConnectedComponents}$(G'(V', E'))$\;

\tcc{Step 2: Label Initialization}
Let $\LZero{}$ is the super node consisting of the vertices $u \in V^{\LZero{}}$ with label $0$ only.\;
Let $\LOne{}$ is the super node consisting of the vertices $u \in V^{\LOne{}}$ with label $1$ only.\;
\For{$c_i \in \mathcal{C}$}{
    $\mathcal{W}_{c_i}^{\LZero{}} \gets \sum_{u \in c_i}\sum_{v \in \LZero{}} w(u,v)$\label{code:weight1}\;
    $\mathcal{W}_{c_i}^{\LOne{}}(c_i) \gets \sum_{u \in c_i}\sum_{v \in \LOne{}} w(u,v)$\;
    \For{$u \in c_i$}{
        $F_u \gets 0.5 - \frac{\mathcal{W}_{c_i}^{\LZero{}}}{2 \cdot (\mathcal{W}_{c_i}^{\LZero{}} + \mathcal{W}_{c_i}^{\LOne{}})} + \frac{\mathcal{W}_{c_i}^{\LOne{}}}{2 \cdot (\mathcal{W}_{c_i}^{\LZero{}} + \mathcal{W}_{c_i}^{\LOne{}})}$\;
    }
}
\tcc{Step 3: Iterative Propagation}
\While{$V_{aff} \neq \emptyset$}{
    \For{$u \in V_{aff}$ in parallel}{
        $\mathcal{W}^{all} \gets \sum_{v \in \mathcal{N}(u)}w(u,v)$\;
        $\mathcal{W}_u^{\LZero{}} \gets \sum_{v \in \LZero{}}w(u,v)$\;
        $\mathcal{W}_u^{\LOne{}} \gets \sum_{v \in \LOne{}}w(u,v)$\;
        
        $F'_u \gets F_u+(0-F_u)  \frac{\mathcal{W}_u^{\LZero{}}}{\mathcal{W}^{all}}+(1-F_u)  \frac{\mathcal{W}_u^{\LOne{}}}{\mathcal{W}^{all}} +\sum_{v \in \mathcal{N}(u) \setminus \{{V^{\LZero{}}, V^{\LOne{}}\}}}(F_v-F_u) \frac{w(u v)}{\mathcal{W}^{all}}$\;
        \If{$F'_u - F_u > \delta$}{
            $V_{aff} \gets V_{aff} \cup \mathcal{N}(u)$\;
            $F'_u \gets F_u$\;
        }
        \Else{Remove $u$ from $V_{aff}$\;}
    }
}
\end{algorithm}

\section{Proposed \dblp{}}

Here, we design a parallel label update algorithm, \dblp{} (Algorithm~\ref{alg:proposed}), to predict labels of vertices in dynamic graphs while avoiding full label recomputation. Our algorithm considers a semi-supervised learning scenario for binary classification with a very few labeled ground-truth set denoted as $V^L = \{V^{\LOne{}} \cup V^{\LZero{}}\}$, where $V^{\LOne{}}$ and $V^{\LZero{}}$ are the sets of vertices of classes $1$ and $0$, respectively and $V^{\LOne{}} \cap V^{\LZero{}} = \emptyset$.
When a new batch of data $(\Delta_t = \{\Delta^{Ins}_t, \Delta^{Del}_t\})$ arrives, 
\dblp{} computes fractional labels in $[0,1]$ for the unlabeled vertices $V^U$ to facilitate a partition into class $0$ and class $1$ sets. Algorithm~\ref{alg:proposed} consists of three steps: (i) Change Adjustment and Sparsification, (ii) Label Initialization, and (iii) Iterative Propagation.

\textbf{Change Adjustment and Sparsification: } 
In a similarity graph $G_t$, the label of a vertex from an unknown class is often influenced by the aggregation of the labels of its neighbors. Therefore, deleting a vertex in a similarity graph can impact the labels of its neighbors. Accordingly, \dblp{} marks the deletion affected vertices by visiting the neighbors of each vertex $u \in \Delta^{Del}_t$ in parallel and storing them in $V_{aff}$, a list of affected vertices that require further processing (Algorithm~\ref{alg:proposed} Line~\ref{code:del_aff_start}-\ref{code:del_aff_end}). Similarly, for inserted vertices, both the vertex $u \in \Delta^{Ins}_t$ and its neighbors $\mathcal{N}(u)$ are marked as affected and require label updates. However, note that newly arrived vertices typically have unknown labels and may require multiple iterations of label propagation to reach a stable label, whereas existing neighboring vertices already have assigned labels from the previous time stamp and should require only a few iterations of label propagation to adjust their labels. 

To improve the efficiency of label update we follow two approaches: (1) A compact graph representation (sparsification) to reduce the size of computation, (2) A good initial labeling for the new unlabeled vertices in $\Delta^{Ins}_t$ such that the label propagation is expected to converge in fewer iterations.
We observe that the data points with common features often show high similarity among themselves, leading to higher edge weights compared to the edge weight connecting two dissimilar data points~\cite{ranjan2025securing}. 
It enables us to design a method to predict the labels of similar incoming data points early through graph sparsification.

Given the incoming vertices $\Delta_t$, Algorithm~\ref{alg:proposed}, Lines~\ref{code:graph_cont_start}-\ref{code:graph_cont_mid1}  first constructs a graph $G' = (V', E')$ solely with the newly arriving vertices such that $V' = \Delta^{Ins}_t$. An edge between a pair $(u, v)$, $\forall u,v \in \Delta^{Ins}_t$, is included in $E'$ only if the similarity between the vertices $Sim(v,u)$ exceeds a predefined threshold $\tau$. This approach yields disjoint subgraphs in which the vertices of each connected subgraph share strong commonality. Consequently, vertices within a connected subgraph are likely to receive similar fractional labels through label propagation. Throughout the experiments, we set the value of $\tau$ to the average of the edge weights in each dataset.

To improve the scalability of label propagation and reduce the number of iterations required for label convergence, the sparsification step identifies the connected components (Let $\mathcal{C}$) of $G'$ and treats each component $c_i \in \mathcal{C}$ as a supernode.


\textbf{Label Initialization: } 
As the vertices in a supernode are considered similar, a supernode can be initialized with a single value representing the initial label for all its vertices. In the absence of additional information, each supernode can be initialized with a label value of $0.5$, representing the midpoint between class labels $0$ and $1$. 
However, starting label propagation with vertex labels closer to their true values takes fewer iterations to converge. Therefore, assuming data points with close label values have higher mutual similarity weights, we leverage similarities between the supernodes and two initial vertex sets with ground truth $V^{\LZero{}}$ and $V^{\LOne{}}$ to predict better initial labels for the supernodes $c_i \in \mathcal{C}$. 

Treating $V^{\LZero{}}$ as a special supernode $\LZero{}$, the similarity weight between any supernode $c_i \in \mathcal{C}$ and $\LZero{}$ can be computed as the edge weight sum $\mathcal{W}_{c_i}^{\LZero{}} = \sum_{u \in c_i}\sum_{v \in \LZero{}} w(u,v)$ (Algorithm~\ref{alg:proposed}, Line~\ref{code:weight1}). 
Similarly, treating $V^{\LOne{}}$ as a supernode $\LOne{}$, the weight between $c_i$ and $\LOne{}$ is $\mathcal{W}_{c_i}^{\LOne{}} = \sum_{u \in c_i}\sum_{v \in \LOne{}} w(u,v)$.

Leveraging the edge weights between the supernodes in $\mathcal{C}$ and the supernodes with ground truth, each vertex $u$ in a supernode $c_i \in \mathcal{C}$ can be initialized with
\[
F_u = 0.5 + (0 - 0.5)\frac{\mathcal{W}_{c_i}^{\LZero{}}}{\bigl(\mathcal{W}_{c_i}^{\LZero{}} + \mathcal{W}_{c_i}^{\LOne{}}\bigr)} + (1-0.5)\frac{\mathcal{W}_{c_i}^{\LOne{}}}{\bigl(\mathcal{W}_{c_i}^{\LZero{}} + \mathcal{W}_{c_i}^{\LOne{}}\bigr)},
\]
where the first term, $0.5$, provides a neutral initialization, the second term reflects the similarity contribution of $\LZero{}$ and reduces the value toward $0$, and the third term reflects the contribution of $\LOne{}$ and increases the value toward $1$.

\textbf{Iterative Propagation: } This step considers the updated graph $G_{t+1} = (V_{t+1}, E_{t+1})$, where $V_{t+1} = V_t \setminus \{\Delta^{Del}_t \cup \Delta^{Ins}_t$\} and $E_{t+1}$ includes all edges with the vertices $V_{t+1}$.
As updating a vertex’s label can affect its neighbors, the iteration begins by updating the labels of the vertices $u \in \Delta_t$ along with their neighbors $\mathcal{N}(u)$. There are three kinds of vertices that can impact the label of a vertex $u$:
\begin{enumerate}
    \item Vertices with ground truth class $0$, denoted as a supernode $\LZero{}$. The similarity weight between $u$ and $\LZero{}$ is the edge-weight sum $\mathcal{W}_u^{\LZero{}} = \sum_{v \in \LZero{}} w(u,v)$.
    \item Vertices with ground truth class $1$. They have similarity weight $\mathcal{W}_u^{\LOne{}} = \sum_{v \in \LOne{}} w(u,v)$ with vertex $u$.
    \item Neighbor vertices $v \in \mathcal{N}(u) \setminus V^{\LZero{}} \setminus V^{\LOne{}}$ from timestamp $t+1$ or earlier, without any ground truth.
\end{enumerate}
Therefore, the label of $u$ at each iteration can be updated as:
\begin{equation}
\begin{split}
F'_u &= F_u+(0-F_u)  \frac{\mathcal{W}_u^{\LZero{}}}{\mathcal{W}^{all}}+(1-F_u)  \frac{\mathcal{W}_u^{\LOne{}}}{\mathcal{W}^{all}}\\
&\quad +\sum_{v \in \mathcal{N}(u) \setminus V^{\LZero{}} \setminus V^{\LOne{}}}(F_v-F_u) \frac{w(u v)}{\mathcal{W}^{all}} \notag
\end{split}
\end{equation}

Here, $\mathcal{W}^{all} = \sum_{v \in \mathcal{N}(u)} w(u,v)$ is the sum of the weights of all edges between $u$ and its neighbors. The second and third terms of the equation reflect the similarity contributions of $\LZero{}$ and $\LOne{}$, respectively. The last term reflects the contribution of the other neighbors of $u$.
If the difference between the updated label and the previous label of $u$ exceeds a predefined threshold $\delta$, its neighbors are likely to be affected and are flagged for label updates in the next iteration. The process converges when no such significant label changes occur in an iteration.




\color{black}

\begin{figure*}[htbp]
  \centering

  \begin{subfigure}[t]{0.245\textwidth}
    \includegraphics[width=\linewidth]%
      {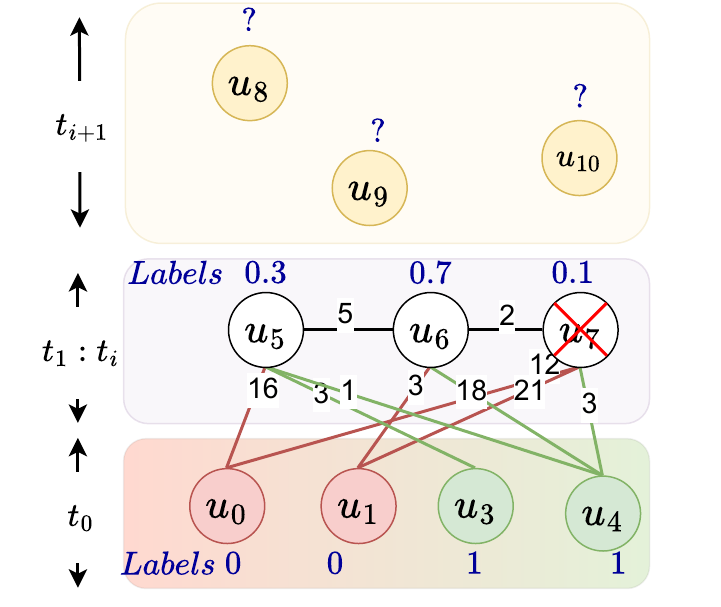}
    \caption{Initial state consists of ground truth at time $t_0$ and data came till time $t_i$}
    \label{fig:e1}
  \end{subfigure}
  \begin{subfigure}[t]{0.185\textwidth}
    \includegraphics[width=\linewidth]%
      {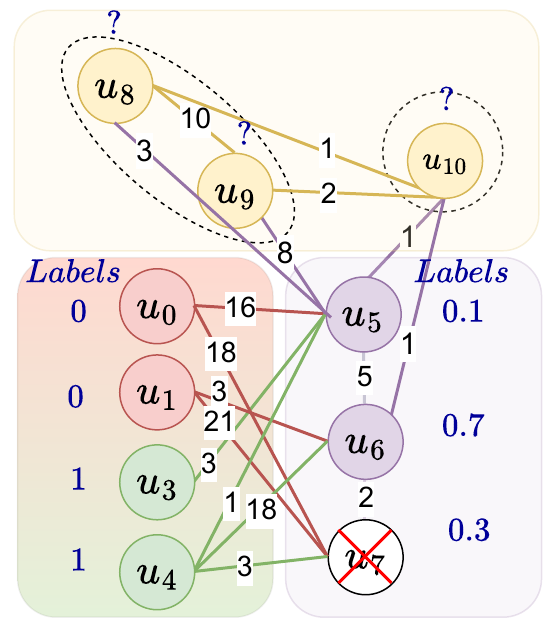}
    \caption{Arrival of new data at time $t_{i+1}$}
    \label{fig:e2}
  \end{subfigure}
  \begin{subfigure}[t]{0.27\textwidth}
    \includegraphics[width=\linewidth]%
      {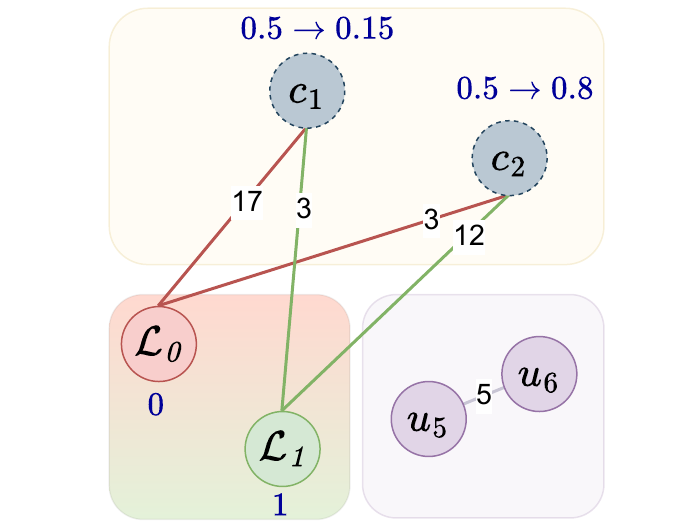}
    \caption{Compact representation with $u_0^*$ and $u_1^*$ as red and green class representative with parallel edge sum.}
    \label{fig:e3}
  \end{subfigure}
  \begin{subfigure}[t]{0.27\textwidth}
    \includegraphics[width=\linewidth]%
      {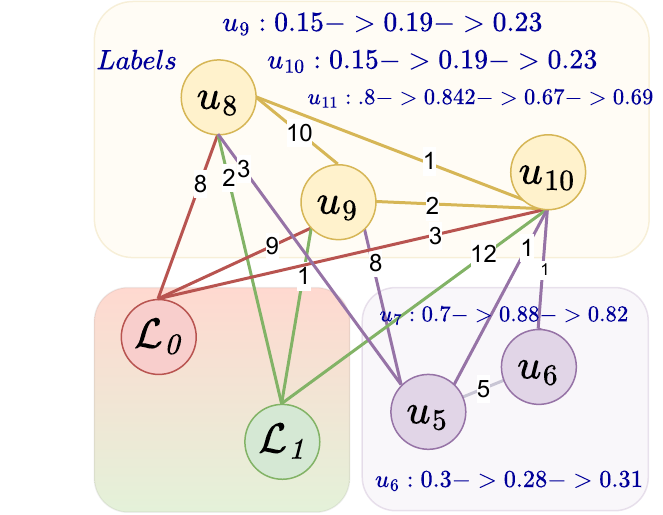}
    \caption{Each connected component is initialized as $0.5$ that converges after one iteration with updated value.} 
    \label{fig:e4}
  \end{subfigure}
  \caption{Evolution of the label-propagation algorithm over six iterations.}
  \label{fig:example}
  \vspace{-0.15in}
\end{figure*}

Figure~\ref{fig:example} provides an overview of the proposed Algorithm~\ref{alg:proposed}. Figure~\ref{fig:e1} depicts the initial setup, including the ground truth at time $t_0$ and all vertices observed during the interval $[t_1, t_{i+1}]$. The ground truth contains two classes: red indicates class label $0$, and green indicates class label $1$. Because the ground truth labels are fixed and do not change as new data points arrive, intermediate edges among ground truth vertices are omitted for clarity. Vertices appearing in the interval $[t_1, t_i]$ (purple region) are shown in color white  with their current estimated labels. Newly added vertices are shown in yellow, and deleted vertices are marked with a cross.
Step 1 computes the weights of edges associated with the new vertices and identifies the connected components. Figure~\ref{fig:e2} shows these connected components, indicated by dotted circles. To reduce visual clutter, we do not display edges formed between the new vertices and the ground truth vertices. Here, $u_5$ and $u_6$ are marked with color violet to indicate affected by their deleted or inserted neighbors $u_7, u_8, u_9, u_{10}$.
Figure~\ref{fig:e3} illustrates Step 2, where the labels of the new vertices in connected components $c_1$ and $c_2$ are initialized using the ground truth vertices. The supernodes composed of vertices with labels $0$ and $1$ are denoted by $\LZero{}$ and $\LOne{}$, respectively. 
Finally, Figure~\ref{fig:e4} illustrates the iterative label update (Step 3) for all affected vertices. The algorithm converges when no affected vertices remain. For example, the label of $u_9$ changes across iterations as $0.15 -> 0.19 -> 0.23$.

\section{Theoretical analysis}

\color{black}
\label{sec:theory}

In this section, we analyze the theoretical properties of the proposed iterative update rule used in \dblp{}. We first show that the update rule is equivalent to standard weighted neighborhood averaging, and then establish convergence to the unique harmonic solution by leveraging the convexity of the Dirichlet energy.

\paragraph{Equivalence between Iterative Update and Neighborhood Averaging.}
Let $u \in V^U$ be an unlabeled vertex with neighbor set $\mathcal{N}(u)$.
Assume that the labels $F_v$ of all neighbors $v \in \mathcal{N}(u)$ are fixed during the update of $u$.
Define the normalized edge weights
\[
\alpha_{u,v} = \frac{w(u,v)}{\sum_{x \in \mathcal{N}(u)} w(u,x)}, 
\qquad \forall v \in \mathcal{N}(u),
\]
which satisfy $\sum_{v \in \mathcal{N}(u)} \alpha_{u,v} = 1$.

\vspace{5pt}
\textbf{(1) Weighted Neighborhood Averaging.}
The classical label propagation update assigns to $u$ the weighted average of its neighbors’ labels:
\begin{equation}
\label{eq:avg}
F_u^\star = \sum_{v \in \mathcal{N}(u)} \alpha_{u,v} F_v .
\end{equation}

\textbf{(2) Iterative Adjustment Rule.}
The iterative update rule used in \dblp{} updates $F_u$ as
\begin{equation}
\label{eq:iter}
T(F_u) = F_u + \sum_{v \in \mathcal{N}(u)} \alpha_{u,v} (F_v - F_u).
\end{equation}

\textbf{(3) Equivalence.}
We show that the update in~\eqref{eq:iter} produces exactly the weighted average in~\eqref{eq:avg}, regardless of the current value of $F_u$:
\[
\begin{aligned}
T(F_u)
&= F_u + \sum_{v \in \mathcal{N}(u)} \alpha_{u,v} F_v
    - F_u \sum_{v \in \mathcal{N}(u)} \alpha_{u,v} \\
&= F_u + \sum_{v \in \mathcal{N}(u)} \alpha_{u,v} F_v - F_u \\
&= \sum_{v \in \mathcal{N}(u)} \alpha_{u,v} F_v \qquad = F_u^\star .
\end{aligned}
\]
Hence, the proposed iterative adjustment rule is equivalent to standard weighted neighborhood averaging and computes the local harmonic condition in a single update.

\paragraph{Convexity of the Dirichlet Energy.}
Let $G = (V,E)$ be a finite, connected, weighted graph with edge weights $w_{u,v} \ge 0$.
Let $V^L \subset V$ denote the set of vertices with ground truth labels, and $V^U = V \setminus V^L$ the unlabeled vertices.
For a label function $F : V \to \mathbb{R}$ satisfying $F_u = Y_u$ for all $u \in V^L$, define the Dirichlet energy
\begin{equation}
\label{eq:energy}
\mathcal{E}(F) = \frac{1}{2} \sum_{(u,v) \in E} w_{u,v} (F_u - F_v)^2 .
\end{equation}

\begin{lemma}[Strict Convexity]
\label{lem:convex}
The energy $\mathcal{E}(F)$ is strictly convex when restricted to the free variables $F_u$, $u \in V^U$.
\end{lemma}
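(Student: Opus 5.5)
We write $\mathcal{E}$ as a quadratic form in the free vector $x = F^U \in \mathbb{R}^{|V^U|}$ and show that its Hessian is positive definite. Order vertices with labeled ones first, as in Section~\ref{sec:prelim}. The standard identity $\frac12\sum_{(u,v)\in E} w_{u,v}(F_u-F_v)^2 = \frac12 F^\top \mathbf{L} F$ then gives
\[
\mathcal{E}(F) = \tfrac12 (F^U)^\top \mathbf{L}_{UU} F^U + (F^U)^\top \mathbf{L}_{UL} F^L + \tfrac12 (F^L)^\top \mathbf{L}_{LL} F^L .
\]
Here we use the symmetry of $\mathbf{L}$, so that $\mathbf{L}_{LU} = \mathbf{L}_{UL}^\top$; the factor $\tfrac12$ convention depends on whether $E$ lists each edge once or twice. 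With $F^L = Y$ fixed, the last two terms are affine in $F^U$. The Hessian with respect to $F^U$ is therefore the constant matrix $\mathbf{L}_{UU}$ (up to that harmless positive factor), and strict convexity reduces to showing $\mathbf{L}_{UU} \succ 0$.

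Positive semidefiniteness is immediate. For any $x \in \mathbb{R}^{|V^U|}$, extend it by zeros on $V^L$ to get $\tilde x$. Then $x^\top \mathbf{L}_{UU} x = \tilde x^\top \mathbf{L} \tilde x = \sum_{(u,v)\in E} w_{u,v}(\tilde x_u - \tilde x_v)^2 \ge 0$.

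For definiteness, suppose $x^\top \mathbf{L}_{UU} x = 0$. Then $\tilde x_u = \tilde x_v$ for every edge with $w_{u,v} > 0$. Because $G$ is connected, $\tilde x$ is constant on all of $V$. It is zero on $V^L$, so it vanishes everywhere and $x = 0$. A quadratic function with positive definite Hessian is strictly convex, because for $x \ne y$ and $\lambda \in (0,1)$ the convexity gap is
\[
\lambda \mathcal{E}(x) + (1-\lambda)\mathcal{E}(y) - \mathcal{E}(\lambda x + (1-\lambda) y) = \tfrac12 \lambda(1-\lambda)(x-y)^\top \mathbf{L}_{UU}(x-y) > 0 ,
\]
taken in the same normalization.

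The main obstacle is the definiteness step, which needs two assumptions that the statement leaves implicit: $V^L \neq \emptyset$ and $G$ connected. Positive weights on edges are also needed, since the connectivity argument uses $w_{u,v} > 0$ on the edges of $E$. The lemma's setup states connectivity. I will make $V^L \ne \emptyset$ explicit, since otherwise constants lie in the kernel and the claim fails. I will also remark that, in the dynamic setting, connectivity can be weakened to requiring that every connected component of $G_{t+1}$ contain a labeled vertex; the same argument then applies componentwise.
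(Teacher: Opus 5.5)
Your proof is correct, and it takes a different route from the paper's. The paper argues by contradiction. If $\mathcal{E}$ were not strictly convex in $F^U$, there would be two distinct stationary minimizers with the same data on $V^L$, i.e.\ two harmonic extensions. That contradicts the uniqueness of the harmonic extension on a connected graph with nonempty $V^L$, which the paper imports by citation from the Dirichlet principle. You instead identify the Hessian as $\mathbf{L}_{UU}$ and prove directly that it is positive definite, via the zero-extension and the observation that zero energy forces constancy along positive-weight edges.

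The paper's version is shorter and links the lemma immediately to the harmonic solution, but it outsources the real content. Uniqueness of the harmonic extension is equivalent to nonsingularity of $\mathbf{L}_{UU}$, and for this quadratic that is exactly strict convexity. The paper also silently relies on $\mathcal{E}$ being a quadratic bounded below. That guarantees a minimizer exists and that every null vector of $\mathbf{L}_{UU}$ is a flat direction. For a general convex function, failing strict convexity does not produce two minimizers: $|x|$ on $\mathbb{R}$ is not strictly convex yet has a unique minimizer.

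Your argument is self-contained and in effect proves the uniqueness the paper cites. It makes the needed hypotheses explicit: $V^L\neq\emptyset$ and connectivity through positive-weight edges. It also gives more than strict convexity. You get invertibility of $\mathbf{L}_{UU}$, so $-\mathbf{L}_{UU}^{-1}\mathbf{L}_{UL}F^L$ is well defined, and strong convexity with modulus proportional to $\lambda_{\min}(\mathbf{L}_{UU})>0$. Your componentwise extension, requiring each connected component to contain a labeled vertex, is correct and worth keeping, since deletions in the dynamic setting can disconnect $G_{t+1}$.
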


\begin{proof}
Assume, for contradiction, that $\mathcal{E}(F)$ is not strictly convex on $V^U$.
Then there exist two distinct minimizers $F^{(1)} \neq F^{(2)}$ satisfying the boundary constraints on $V^L$ and
\[
\nabla \mathcal{E}(F^{(1)}) = \nabla \mathcal{E}(F^{(2)}) = 0 .
\]
This implies the existence of multiple harmonic extensions of the same boundary labels.

However, by the Dirichlet principle for finite graphs~\cite{zhu2003semi}, for a connected graph with a nonempty boundary set $V^L$, there exists a \emph{unique} harmonic function on $V^U$ that minimizes~\eqref{eq:energy}.
This contradicts the assumption of multiple minimizers.
Therefore, $\mathcal{E}(F)$ must be strictly convex on $V^U$.
\end{proof}

\paragraph{Convergence of the Iterative Update.}
We now establish convergence of the proposed update rule.

\begin{corollary}[Convergence to the Harmonic Solution]
\label{cor:convergence}
Let $G=(V,E)$ be a finite connected graph with ground truth vertices $V^L \neq \emptyset$.
If the labels of unlabeled vertices $u \in V^U$ are updated according to
\[
F_u \leftarrow F_u + \sum_{v \in \mathcal{N}(u)} \alpha_{u,v} (F_v - F_u),
\]
then the update process converges to the unique harmonic solution
\[
F^U = -\mathbf{L}_{UU}^{-1} \mathbf{L}_{UL} F^L .
\]
\end{corollary}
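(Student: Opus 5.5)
The plan is to write the update in matrix form and show it is a convergent stationary iteration (Jacobi, or Gauss--Seidel if updates are applied asynchronously) for the linear system whose solution is the harmonic extension.

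\textbf{Step 1: fixed points.} By the equivalence computation above, the update of $u\in V^U$ sets $F_u$ to $\sum_{v}\alpha_{u,v}F_v$. Write $\mathbf{P}=\mathbf{D}^{-1}\mathbf{W}$, partitioned as $\mathbf{P}_{UU}$, $\mathbf{P}_{UL}$. A synchronous sweep is then the affine map
\[
F^U \mapsto \mathbf{P}_{UU}F^U + \mathbf{P}_{UL}F^L .
\]
Its fixed points satisfy $(\mathbf{I}-\mathbf{P}_{UU})F^U=\mathbf{P}_{UL}F^L$. Multiplying by $\mathbf{D}_{UU}$ gives $\mathbf{L}_{UU}F^U=-\mathbf{L}_{UL}F^L$.

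\textbf{Step 2: invertibility and uniqueness.} By Lemma~\ref{lem:convex}, the energy is strictly convex in $F^U$. Its Hessian is $\mathbf{L}_{UU}$, so $\mathbf{L}_{UU}$ is positive definite and hence invertible. The fixed point is therefore unique and equals $-\mathbf{L}_{UU}^{-1}\mathbf{L}_{UL}F^L$. It is also the unique minimizer of $\mathcal{E}$.

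\textbf{Step 3: convergence.} This is the main step. I would show $\rho(\mathbf{P}_{UU})<1$.
\begin{itemize}
\item $\mathbf{P}_{UU}$ is nonnegative and substochastic, so $\|\mathbf{P}_{UU}\|_\infty\le 1$.
\item Because $G$ is connected and $V^L\neq\emptyset$, every $u\in V^U$ has a path of length at most $m=|V^U|$ to some labeled vertex.
\item Along such a path, probability mass leaks into $V^L$. Hence every row sum of $\mathbf{P}_{UU}^m$ is at most $1-\epsilon$ for some $\epsilon>0$.
\end{itemize}
Equivalently, $\mathbf{P}_{UU}$ is the transition matrix of a random walk killed on $V^L$, and such a walk is absorbed almost surely. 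It follows that $\|\mathbf{P}_{UU}^m\|_\infty<1$, so $\rho(\mathbf{P}_{UU})<1$.

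The error $e^{(k)}=F^{U,(k)}-F^U_\ast$ satisfies $e^{(k+1)}=\mathbf{P}_{UU}e^{(k)}$. It therefore decays geometrically from any initialization, including the supernode initialization of Step~2 of Algorithm~\ref{alg:proposed}.

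\textbf{Step 4: sequential and asynchronous updates.} If updates are applied one vertex at a time, I would use the energy argument instead. Each single-coordinate update exactly minimizes $\mathcal{E}$ in $F_u$ with the others held fixed, since the minimizer is the weighted average. So $\mathcal{E}$ decreases monotonically; this is Gauss--Seidel for the symmetric positive definite $\mathbf{L}_{UU}$, which converges. Its only limit point is the stationary point, which Step~2 identifies as the harmonic solution.

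The thresholded active-set version with $\delta$ only yields an approximate fixed point, within $O(\delta)$ of the harmonic solution. I would state it as that approximate guarantee rather than claim exact convergence.
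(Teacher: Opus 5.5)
Your proof is correct, but it gets convergence from a different mechanism than the paper does. The paper's proof sketch uses two facts: the equivalence computation (each update makes $u$ locally harmonic) and Lemma~\ref{lem:convex} (a unique energy minimizer over $V^U$). From these it asserts that repeated updates converge. Those facts only show that the harmonic solution is the \emph{unique fixed point} of the update, which is what your Steps~1--2 establish. They do not show that the iterates approach it: restoring harmonicity at $u$ generally breaks it at the neighbors of $u$, and a unique fixed point need not be attracting. Your Step~3 supplies the missing contraction. You write a synchronous sweep as an affine map with matrix $\mathbf{P}_{UU}$ and prove $\rho(\mathbf{P}_{UU})<1$ via the killed random walk. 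This is exactly where connectedness and $V^L\neq\emptyset$ enter. It gives geometric convergence from any starting vector, so the supernode initialization of Algorithm~\ref{alg:proposed} is covered. Your Step~4 turns the paper's appeal to the energy into an actual descent argument for sequential sweeps (Gauss--Seidel on the symmetric positive definite $\mathbf{L}_{UU}$). The paper's route is brief and ties the result to the variational characterization. Yours is an actual convergence proof, with a rate.

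Three small refinements:
\begin{itemize}
\item Step~3 alone shows that $\mathbf{I}-\mathbf{P}_{UU}$ is invertible, hence so is $\mathbf{L}_{UU}=\mathbf{D}_{UU}(\mathbf{I}-\mathbf{P}_{UU})$. You can therefore drop the dependence on Lemma~\ref{lem:convex} and make the argument self-contained.
\item For genuinely asynchronous parallel updates (adjacent vertices updated simultaneously with stale reads, as in the GPU kernel), the energy-descent argument does not apply directly. However, $\mathbf{P}_{UU}\ge 0$ with $\rho(\mathbf{P}_{UU})<1$ is exactly the Chazan--Miranker condition for convergence of asynchronous iterations, so invoke that for this case.
\item Your closing $O(\delta)$ remark hides the graph-dependent factor $\|(\mathbf{I}-\mathbf{P}_{UU})^{-1}\|_\infty$, which is the largest expected absorption time of the killed walk. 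Also, under the active-set rule a deactivated vertex is reactivated only when a neighbor moves by more than $\delta$. Its residual can therefore drift above $\delta$ through accumulated sub-threshold changes of its neighbors, so that side claim needs its own argument.
\end{itemize}
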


\begin{proof}[Proof Sketch]
From the equivalence established earlier, each update enforces the local harmonic condition.
By Lemma~\ref{lem:convex}, the Dirichlet energy has a unique minimizer over $V^U$.
Therefore, repeated application of the update rule converges to this unique harmonic solution, which coincides with the closed-form Laplacian-based solution.
\end{proof}

\color{black}

{\color{black}

\vspace{-0.1in}
\section{Implementation details}

\subsection{Load balancing}

We store the graphs in compressed sparse row (CSR) format where each vertex $v$ owns a row segment $[rowPtr[v],\,rowPtr[v+1])$ whose length $deg(v)$ may vary by orders of magnitude, leading to potential load imbalance. To handle this variability, we adopt a block-per-row segment execution model, where each thread block is assigned to process a single row segment. Threads within the block cooperatively traverse the neighbor list in a block-strided manner, enabling efficient and parallel processing of high-degree vertices.
Partial results computed by individual threads are combined using shared-memory block-level reduction. Although the degree of row segments remain irregular, this cooperative block-level parallelism mitigates long-tail effects for hub vertices and allows the GPU scheduler to maintain overall load balance through concurrent execution of multiple blocks.

\subsection{Kernel design}
Figures~\ref{fig:kernel} and~\ref{fig:dynamic} illustrate the GPU kernel designs developed to achieve scalability.

\color{black}
Given a graph where each pairwise similarity between vertices are available, we obtain the connected components by temporarily removing all edges whose weights are below a threshold $\tau$. Figure~\ref{fig:k1} presents the kernel responsible for temporarily removing edges by negating the destination vertex ID in the col array of the CSR data structure. This operation does not actually remove the edge from the structure; instead, it flags it for exclusion in later steps where edge information may still be needed. For instance, with a similarity threshold value of $\tau = 5$, the kernel marks entries at indices ${1, 3, 4, 5}$ in the col array as deleted. This task is embarrassingly parallel and benefits from memory coalescing, a performance optimization enabled by the regular access patterns of the CSR format.
\color{black}


For connected component find, we adopt the\\ Shiloach–Vishkin (SV) algorithm~\cite{vishkin1982log} 
that relies on simple, data-parallel operations, and its pointer-jumping step involves parent updates through regular array accesses, making it well-suited for GPU execution. 
Furthermore, mapping one thread per vertex enables high occupancy, effectively hiding memory latency and improving overall throughput. 
Figure~\ref{fig:k2} illustrates the kernel implementing SV algorithm, which consists of two iterative steps: (i) Hook and (ii) Jump. These kernels continue alternating until no changes are detected after a Jump step, indicating convergence.

In the Hook phase, each thread is assigned to a vertex and updates its parent in the \textit{par }array to be the minimum of its current parent and the IDs of its adjacent vertices (including itself). For example, as shown in Figure~\ref{fig:k2} (left), vertex index $9$ updates its parent from $9$ to $8$ (since $u_{8}$ has the smallest ID among its neighbors). Vertices $u_{8}$ and $u_{10}$ remain unchanged.

The Jump phase then compresses the paths in the parent array by updating each element par[i] to par[par[i]], effectively halving the distance to the root. This process is illustrated in Figure~\ref{fig:k4}. In this example, as the Jump phase makes no further updates, the iteration terminates after a single pass. The final \textit{par }array contains two unique parent IDs ${0, 2}$, indicating the existence of two connected components. However, these component IDs are non-sequential (e.g., ID $1$ is skipped), which is resolved via prefix scan operation using thrust library.

Figure~\ref{fig:k3} demonstrates the same kernel logic applied with a lower threshold, $\tau = 1$, resulting in fewer edges being marked for deletion. Figure~\ref{fig:k4} shows the Hook kernel producing a parent array of ${8, 8, 9}$, where $u_{10}$ identifies $u_{9}$ as its parent. The subsequent Jump phase updates $u_{10}$ to have the same parent as $u_{9}$, effectively compressing the path. Since a change occurred, the algorithm proceeds with further iterations of Hook and Jump until no updates are detected in the Jump step.


\begin{figure*}[htbp]
  \centering

  \begin{subfigure}[t]{0.20\textwidth}
    \includegraphics[width=\linewidth]%
      {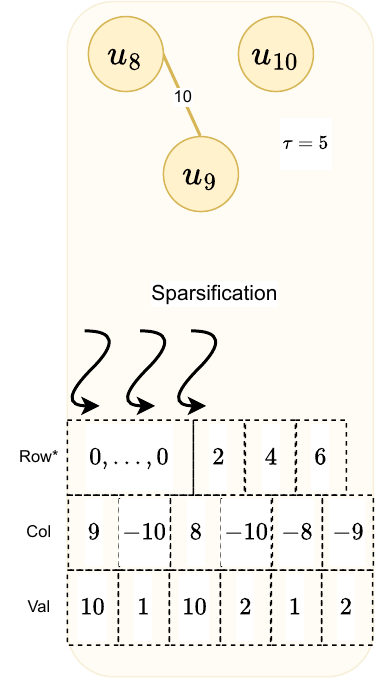}
    \caption{Parallel sparsification kernel step the node as negative to denote as temporary deleted for edge weight less than or equal to $\tau = 5$}
    \label{fig:k1}
  \end{subfigure}\hfill
  \begin{subfigure}[t]{0.22\textwidth}
    \includegraphics[width=\linewidth]%
      {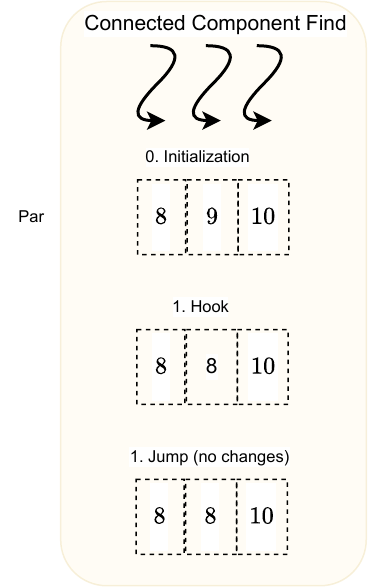}
    \caption{Parallel connected component find for (a)}
    \label{fig:k2}
  \end{subfigure}\hfill
  \begin{subfigure}[t]{0.20\textwidth}
    \includegraphics[width=\linewidth]%
      {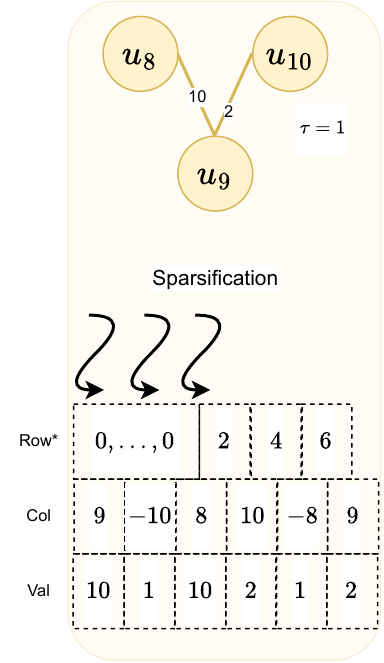}
    \caption{Parallel sparsification kernel sets the node as negative to denote as temporary deleted for edge weight less than or equal to $\tau = 1$}
    \label{fig:k3}
  \end{subfigure}\hfill
  \begin{subfigure}[t]{0.15\textwidth}
    \includegraphics[width=\linewidth]%
      {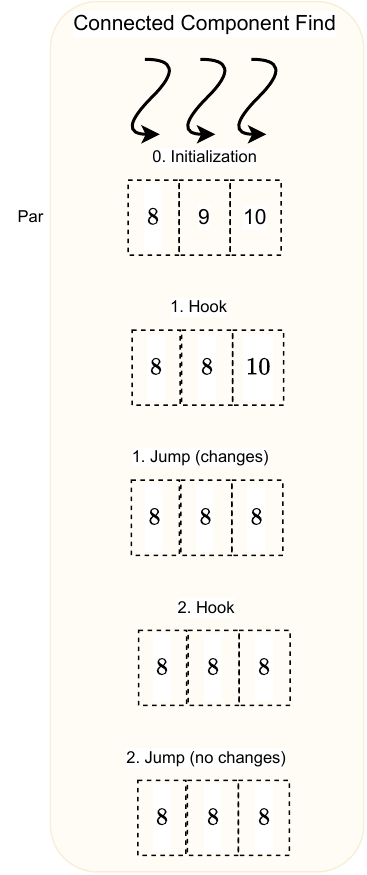}
    \caption{Parallel connected component find for (c)}
    \label{fig:k4}
  \end{subfigure}

  \caption{CUDA Kernel design for sparsification and connected component finding}
  \label{fig:kernel}
  \vspace{-0.1in}
\end{figure*}

Computing the parallel edge sum is a critical operation used in both Line 22 and Line 28 of Algorithm~\ref{alg:proposed}. To achieve efficient parallelism, we aim to process all vertices in the affected node set (i.e., the frontier list) concurrently. However, each vertex must aggregate edge weights over three distinct subsets of nodes: (i) the ground truth nodes, (ii) vertices that appeared from time $t_1$ to $i$, and (iii) vertices appearing at time $i+1$. 

Assigning a single thread per vertex would result in sequential edge sum computations within each vertex’s neighborhood, introducing significant performance bottlenecks. To overcome this limitation, we employ CUDA block level parallelism. Instead of mapping a single thread to each vertex in the affected set, we launch an entire thread block per vertex. Within each block, multiple threads collaboratively compute the edge sum in parallel, significantly reducing computation time. This approach is illustrated in Figure~\ref{fig:dynamic}.

In cases where the number of neighboring vertices exceeds the number of threads in a block, we implement strided parallelism. 

\begin{figure}[htbp]
  \centering
  \includegraphics[width=0.8\linewidth]{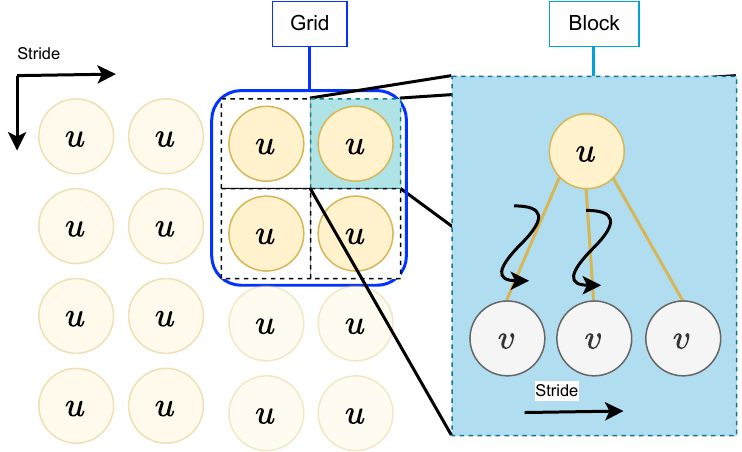}
  \caption{Block level granularity to process nodes in the frontier list }
  \label{fig:dynamic}
\end{figure}

\begin{figure*}[htbp]
  \centering

  \begin{subfigure}[t]{0.30\textwidth}
    \includegraphics[width=\linewidth]%
      {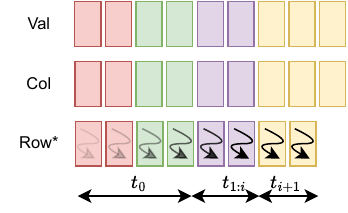}
    \caption{CSR vector layout in host device memory to update graph efficiently}
    \label{fig:m1}
  \end{subfigure}\hfill
  \begin{subfigure}[t]{0.30\textwidth}
    \includegraphics[width=\linewidth]%
      {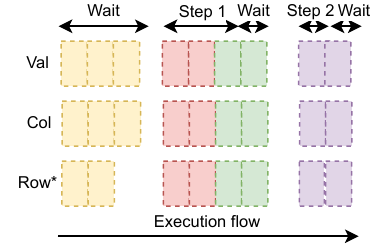}
    \caption{Asynchronous memory transfer and overlapping kernel execution to hide memory latency}
    \label{fig:m2}
  \end{subfigure}\hfill
  \begin{subfigure}[t]{0.25\textwidth}
    \includegraphics[width=\linewidth]%
      {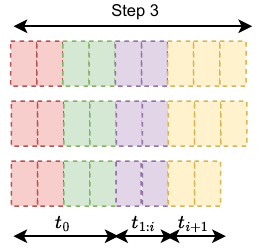}
    \caption{GPU memory layout}
    \label{fig:m3}
  \end{subfigure}

  \caption{Memory latency hiding}
  \label{fig:memory}
  \vspace{-0.2in}
\end{figure*}

\subsection{Overlapping subgraph transfer and \\kernel execution}
Figure~\ref{fig:memory} illustrates the use of asynchronous memory transfer from host to device for updating the CSR graph stored in host (CPU) memory. Rather than transferring the entire graph to the GPU for each incoming batch of vertices, we asynchronously transfer only the required portions. This approach enables overlapping memory transfer with kernel execution, effectively hiding memory transfer latency and improving overall throughput.

 Figure~\ref{fig:m1} depicts the parallel update of the graph in CPU memory. On the CPU, we represent the graph using a 2-D vector structure to facilitate dynamic memory allocation. Since the graph grows incrementally—only allowing additions of vertices and edges—we parallelize the update process by assigning each incoming batch of vertices to separate threads. This enables efficient concurrent insertion of new edges and vertices without the need for global synchronization.

Subsequently, we transfer data from the CPU to the GPU using asynchronous memory transfer, as illustrated in Figure~\ref{fig:m2}. Instead of transferring the entire graph, we begin by transferring only the vertices arriving at time $t_{i+1}$, highlighted in yellow. As soon as this memory transfer is initiated, we launch the sparsification and connected component kernels (corresponding to Step 1 and Step 2 in Algorithm~\ref{alg:proposed}) on this subset. Concurrently, while the sparsification kernel is executing, we initiate the transfer of ground truth data (shown in red and green) to the GPU. Once the ground truth data is available on the device, we perform the parallel edge sum operation as described in Line 14 of Algorithm~\ref{alg:proposed}. Additionally, the memory corresponding to vertices arriving at $t_{1:i}$ is transferred immediately after the ground truth at $t_0$ is available, enabling the execution of Step 3. This pipelined data transfer and execution strategy significantly reduces idle time and improves throughput by an average factor of $9.7\times$.

Figure~\ref{fig:m3} illustrates the static memory layout of the graph in GPU memory. Although the layout itself remains fixed, the rows ($\mathit{row}^*$) corresponding to vertices arriving at times $t_1 \dots t_{i+1}$ are appended sequentially as they arrive. This contiguous storage pattern ensures that each kernel benefits from coalesced memory access, thereby improving memory bandwidth utilization and overall performance.

}

\begin{table}
\caption{Baseline methods compared with \dblp{}.}
    \label{tab:baselines}
    \centering
    \small
    \begin{tabular}{|c|cccccc|}\hline
        Method  & \makecell{\rotatebox{90}{\textbf{Incremental}}} & \makecell{\rotatebox{90}{\textbf{Decremental}}}  &
        \makecell{\rotatebox{90}{\textbf{Parallelism}}} & \makecell{\rotatebox{90}{\textbf{Update}}} &
        \makecell{\rotatebox{90}{\textbf{Target graph}}} &
        \makecell{\rotatebox{90}{\textbf{Memory Optim.}}}\\ \hline
         \itlp{}~\cite{zhu2003semi} & \cmark &   \xmark &  \xmark &  \xmark & Dense & Compression\\
        CAGNN\cite{zhu2020cagnn} & \cmark &   \xmark &  \cmark  &  \xmark & Dense & Compression\\
        A2LP\cite{zhang2020label} & \cmark &   \xmark &  \cmark  &  \xmark & Sparse & KNN\\
        \stlp{}~\cite{wagner2018semi} & \cmark & \cmark & \cmark  &  \xmark & Dense & N/A\\
        Approx \stlp{}\cite{ponte2024computing} & \cmark & \cmark & \cmark  &  \xmark & Dense & Sparse Inverse\\
        \dblp{} [this work] & \cmark & \cmark & \cmark & \cmark & Sparse & CSR \\ \hline
    \end{tabular}
    \vspace{-0.15in}
\end{table}

\section{Experimental Results}

\subsection{Experimental Setup}

All GPU experiments were conducted on an NVIDIA H100 with 80 GB of VRAM. 
The host machine is equipped with an AMD EPYC 7502 32-Core CPU and 32 GB of RAM.

\textit{Baselines. } We compare \dblp{} with the state-of-the-art methods listed in Table~\ref{tab:baselines}. We implement the average-neighborhood method \itlp{} in a GPU setting to evaluate both iteration count and parallel execution time. We also efficiently parallelize \stlp{}, which was traditionally constrained by the inverse adjacency matrix, leading to high execution time and memory usage. We mitigate this limitation by incorporating an approximate inverse\cite{ponte2024computing}, improving memory scalability. In addition, we include machine-learning baselines such as A2LP\cite{zhang2020label} and CAGNN\cite{zhu2020cagnn} to provide a comprehensive comparison 
across approaches.

\textit{Datasets. } We evaluate \dblp{} using synthetic and real-world datasets listed in Table~\ref{tab:dataset_description}. The synthetic sparse graphs are generated using the Erdős–Rényi model, with the average degree varying among 3, 5, and 7. 
Following the standard approach~\cite{subramanya2014graph}, non-graph datasets are modeled as graphs. For ImageNet, we select an image set of 50K with classes “cat” and “non-cat” to form a balanced binary classification problem. Each image is represented as a node, and feature vectors were extracted from the \emph{penultimate layer} of a pretrained model ResNet-50~\cite{he2016deep}. Pairwise cosine similarities~\cite{bayardo2007scaling} are computed to construct a fully connected similarity matrix, which is then sparsified using a $k$-nearest neighbor (kNN) with $k=5$ as proposed in \cite{lingam2021glam}.
\begin{table}
\caption{Dataset description}
    \label{tab:dataset_description}
    \small
    \centering
    \begin{tabular}{|c|c|c|} \hline
        Dataset & $|V|$ & $|E|$\\ \hline
        IMDB Review\cite{maas-EtAl:2011:ACL-HLT2011} & 50,000 & 125K  \\
        ImageNet\cite{deng2009imagenet} & 50,000 & 125K \\
        Yelp Review\cite{yelp_review_dataset_kaggle} & 6,990,280 & 17M\\
        Amazon Household Review\cite{hou2024bridging} & 25,600,000 & 64M\\
        Amazon Book Review\cite{hou2024bridging} & 29,500,000 & 73M\\
        Random Dataset &  50,000,000 & \{75M,62M,175M\} \\ \hline
    \end{tabular}
    \vspace{-0.2in}
\end{table}
For the review datasets (IMDB, Yelp, Amazon Household, and Amazon Book), each review is considered as a vertex. We compute Term Frequency-Inverse Document Frequency (TF–IDF) \cite{sparck1972statistical} of the reviews to generate embeddings, then apply pairwise cosine similarity among them and sparsify using the aforementioned kNN-based strategy to form edges. The IMDB dataset is inherently binary-labeled, while for Yelp and Amazon reviews, we convert the star ratings into binary classes, assigning label $1$ to reviews with a rating of three stars or higher, and $0$ otherwise.

 The \itlp{}, \stlp{} and \dblp{} support dynamic updates, including the insertion of ground-truth and unlabeled vertices, as well as the deletion of existing vertices. In all experiments, each batch of changes consists of $90\%$ unlabeled new vertices, $1\%$ vertices with ground-truth, and $9\%$ deleted vertices. For deletions, vertices are randomly selected from the subgraph of the existing graph while ensuring that the entire graph is not removed. When the required number of deletions exceeds the number of available vertices, sampling is performed with replacement, allowing the same vertex to be selected multiple times to avoid size inconsistencies.
 
\vspace{-0.5em}

\begin{figure}[htbp]
  \centering

  \begin{subfigure}[b]{0.45\linewidth}
    \centering
    \includegraphics[width=\linewidth]{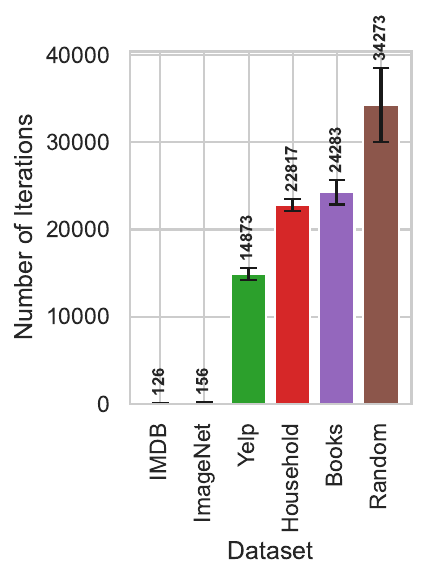}
    \caption{Required iterations.}
    \label{fig:iterations}
  \end{subfigure}
  \hfill
  \begin{subfigure}[b]{0.45\linewidth}
    \centering
    \includegraphics[width=\linewidth]{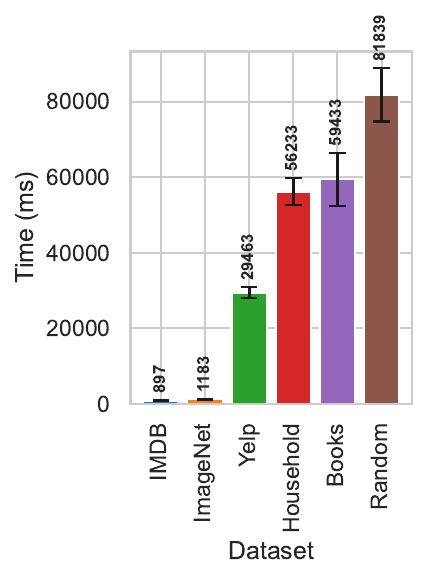}
    \caption{Execution time.}
    \label{fig:execution}
  \end{subfigure}

  \caption{Iterations and execution time of \dblp{} across datasets.}
  \label{fig:prop}
\vspace{-0.15in}
\end{figure}

\vspace{-0.5em}
\subsection{Experiment on \dblp{} properties}

In our first experiment, we set the average degree of all datasets to 5 with the goal to study the impact of the size of the datasets for \dblp{}. We assume that $1\%$ of the vertices in each dataset have ground truth labels, while the remaining vertices require labeling. We place all unlabeled vertices into a single batch and process them using \dblp{}. Figure~\ref{fig:iterations} and Figure~\ref{fig:execution} report the required iterations and execution time of \dblp{} across datasets, respectively. We observe that as the number of vertices increases, both the iterations and the execution time increase. IMDB, as the smallest dataset, requires the fewest iterations (average 126) and the least time (average 897 ms). In contrast, the random graph with $1000\times$ more vertices than IMDB requires 34,273 iterations and has an average execution time of 81,839 ms.

\begin{figure}[htbp]
  \centering
  \begin{subfigure}[b]{0.88\linewidth}
    \centering
    \includegraphics[width=\linewidth]{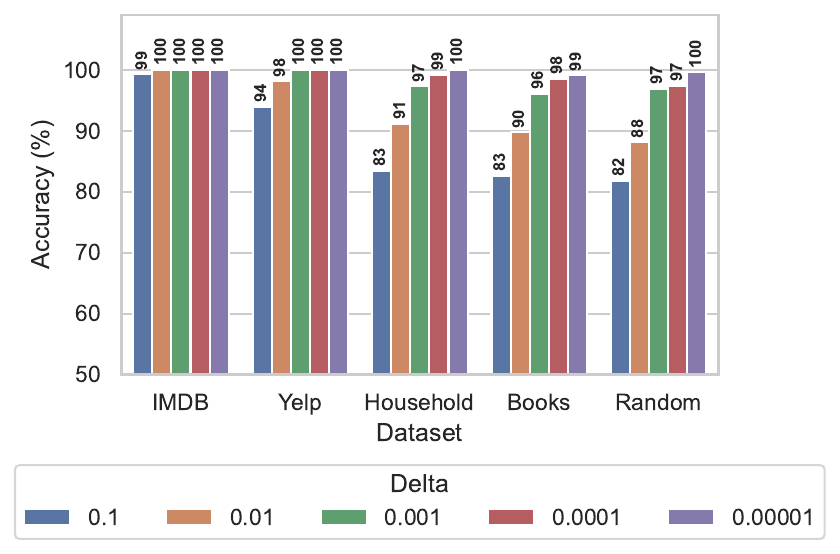}
    \caption{Accuracy variation on different datasets.}
    \label{fig:delta1}
  \end{subfigure}

  \begin{subfigure}[b]{0.88\linewidth}
    \centering
    \includegraphics[width=\linewidth]{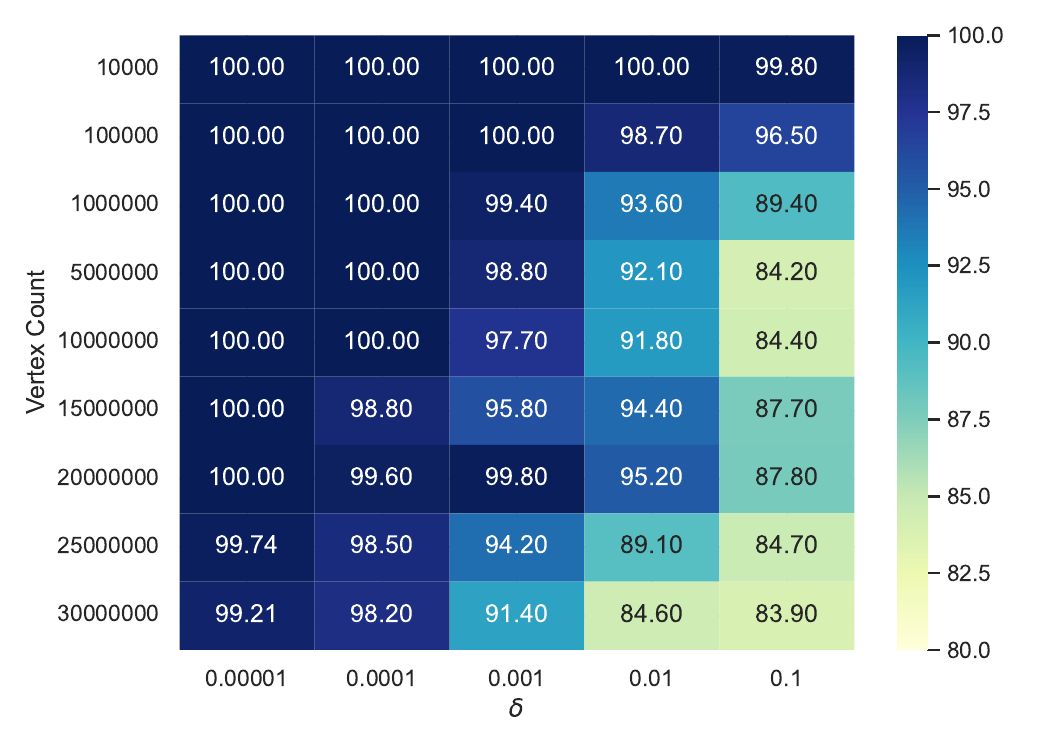}
    \caption{Accuracy variation on different batchsize.}
    \label{fig:delta2}
  \end{subfigure}

  \caption{Impact of $\delta$ on \dblp{}}
  \label{fig:prop}
\vspace{-0.25in}
\end{figure}


In the next set of experiments, we vary the update threshold $\delta$ from the set $\{0.1, 0.01, 0.001, 0.0001, 0.00001\}$ and study its impact on the execution of \dblp{}. We find that $\delta$ directly affects the number of iterations, which in turn determines the total execution time. As $\delta$ increases, Algorithm~\ref{alg:proposed} terminates faster because Step 3 requires fewer iterations. However, early termination can reduce accuracy.
Here, by accuracy, we mean the fraction of correctly predicted levels divide by total predicted levels. Each of the levels is mapped to either 0 or 1 with a cutoff probability threshold of 0.5. The accuracy is measured relative to the baseline method of Wagner et al.~\cite{wagner2018semi}, which optimally minimizes the energy function.

Figure~\ref{fig:delta1} shows that accuracy is lower for larger $\delta$ and in most cases, $\delta = 0.0001$ achieves near optimal accuracy. Reducing it further slightly improves accuracy, but increases the number of iterations and the execution time.
We also observe that accuracy decreases as the graph size increases. To further analyze the impact of $\delta$ under different input batch sizes, we vary the number of vertices per batch from 10,000 to 30,000,000 on a random graph. Figure~\ref{fig:delta2} shows that accuracy decreases as batch size increases. We find that $\delta = 0.0001$ or smaller achieves near optimal accuracy across all batch sizes. Therefore, we use $\delta = 0.0001$ in the subsequent experiments.

\subsection{Comparison with baselines}

\begin{figure*}[htbp]
  \centering
  \begin{subfigure}[b]{0.30\textwidth}
    \centering
    \includegraphics[width=\linewidth]{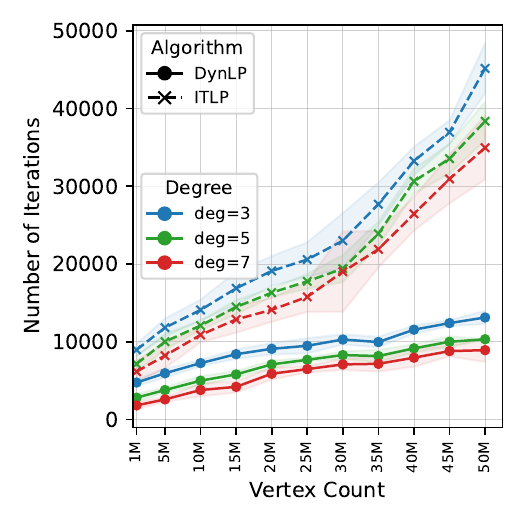}
    \caption{Iterations (Random Graph)}
  \end{subfigure}
  \hfill
  \begin{subfigure}[b]{0.30\textwidth}
    \centering
    \includegraphics[width=\linewidth]{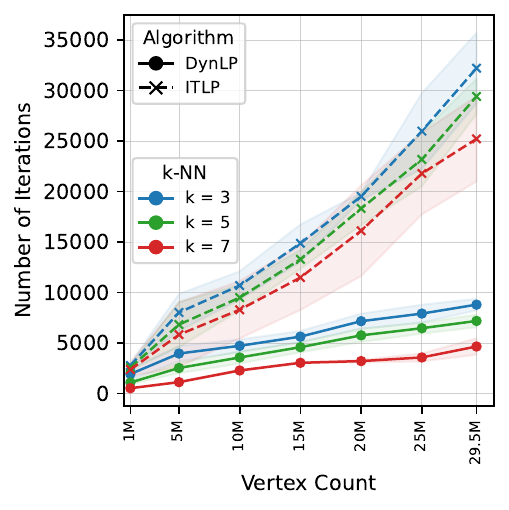}
    \caption{Iterations (Amazon Books)}
  \end{subfigure}
    \hfill
    \begin{subfigure}[b]{0.30\textwidth}
    \centering
    \includegraphics[width=\linewidth]{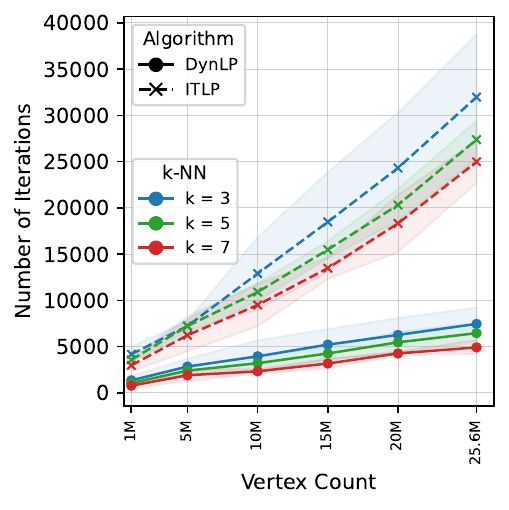}
    \caption{Iterations (Amazon Household)}
  \end{subfigure}
  \hfill
  
  \begin{subfigure}[b]{0.30\textwidth}
    \centering
    \includegraphics[width=\linewidth]{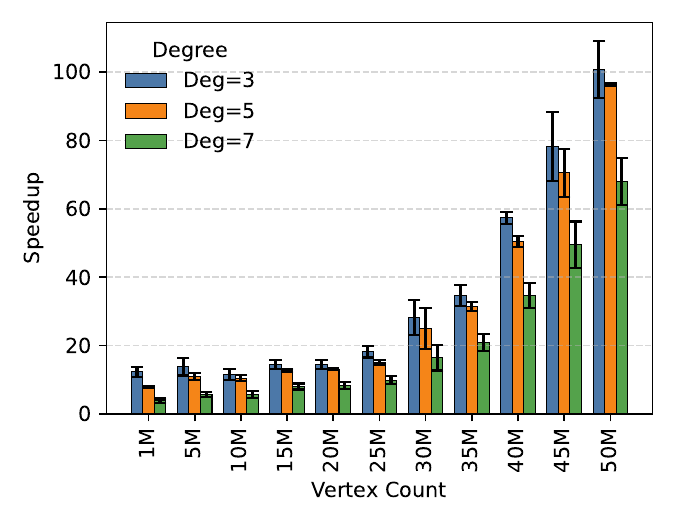}
    \caption{Speedup (Random Graph)}
  \end{subfigure}
  \hfill
  \begin{subfigure}[b]{0.30\textwidth}
    \centering
    \includegraphics[width=\linewidth]{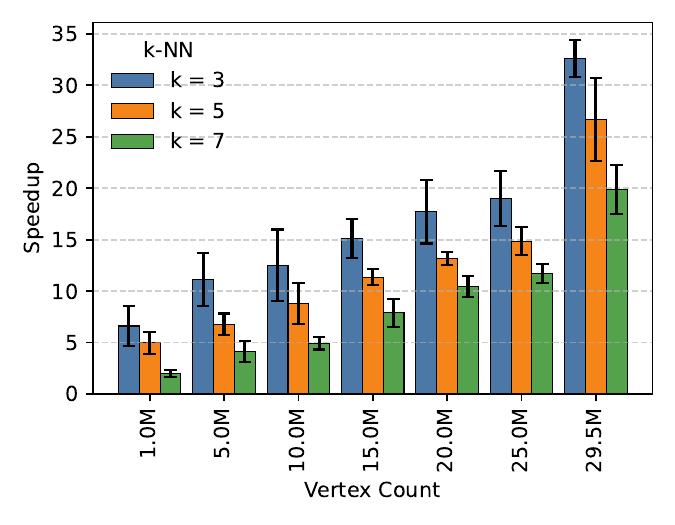}
    \caption{Speedup (Amazon Books)}
  \end{subfigure}
  \hfill
  \begin{subfigure}[b]{0.30\textwidth}
    \centering
    \includegraphics[width=\linewidth]{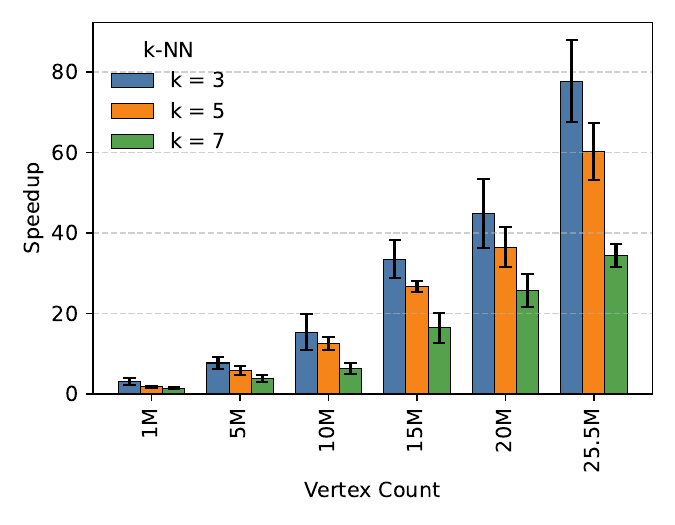}
    \caption{Speedup (Amazon Household)}
  \end{subfigure}

  \caption{Iteration and speedup comparison between \dblp{} and \itlp{} as vertex count varies. }
  \label{fig:iter_speedup_all}
  \vspace{-0.15in}
\end{figure*}




\textbf{Comparison with \itlp{}: } We compare \dblp{} with our GPU implementation of \itlp{}. For each graph, we begin with $10^4$ randomly selected initial vertices with ground truth labels. Then, in each batch, $5$ million vertices are added, and this process continues until the total number of vertices matches that of the original graph. In this experiment, we vary the average degree (for the random graph) and $k$ (for the non-graph data) among $3,5$, and $7$. Since both \dblp{} and \itlp{} rely on iterative convergence, we compare their required number of iterations in Figures~\ref{fig:iter_speedup_all}(a), (b), and (c). Note that we set the convergence parameter $\delta = 0.0001$ for both \dblp{} and \itlp{}. We observe that across all experiments, \itlp{} requires more iterations than \dblp{} because \itlp{} recomputes labels for all vertices in every round, whereas our method updates labels for previously inserted vertices and efficiently computes labels for newly added vertices using a connected component assisted label initialization technique. Moreover, the gap in required iterations increases as the total vertex count grows. We also find that, for both methods, the iteration count decreases as the average degree increases. With the number of vertices fixed, increasing the average degree (or $k$) makes the graph denser, which reduces the hop distance between vertices. Consequently, in denser graphs, labels propagate to more vertices within fewer iterations, reducing the overall iteration requirement.

Figures~\ref{fig:iter_speedup_all}(d), (e), and (f) plot the speedup, computed as the ratio of the execution time of \itlp{} to that of \dblp{}. On the random graph, \dblp{} runs up to $100\times$ faster than \itlp{}. On the Amazon books and household datasets, \dblp{} achieves up to $35\times$ and $80\times$ speedup, respectively. Consistent with the iteration trends, the speedup increases as the graph’s average degree decreases.

\begin{figure}[htbp]
\vspace{-0.05in}
  \centering
  \begin{subfigure}[b]{0.48\linewidth}
    \centering
    \includegraphics[width=\linewidth]{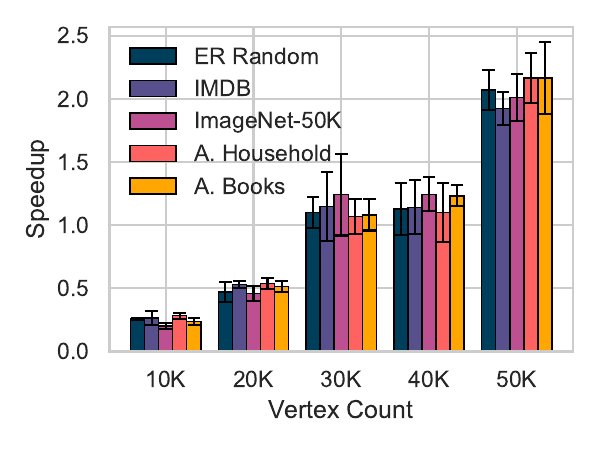}
    \caption{Speedup over the \stlp{} considering only kernel execution time.}
    \label{fig:speedupkernel}
  \end{subfigure}
  \hfill
  \begin{subfigure}[b]{0.48\linewidth}
    \centering
    \includegraphics[width=\linewidth]{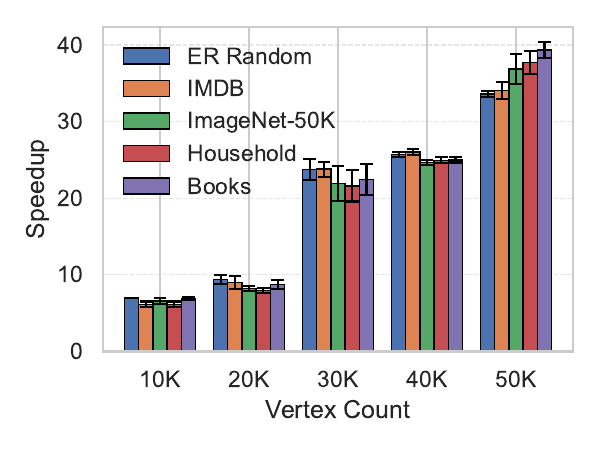}
    \caption{Speedup considering memory transfer + kernel execution time.}
    \label{fig:speedupkernelmem}
  \end{subfigure}
  \caption{Speedup comparison between \dblp{} and \stlp{}}
  \label{fig:iter_speedup_combined_Wagner}
\vspace{-0.15in}
\end{figure}



\textbf{Comparison with \stlp{}: }
Here, we compare our proposed \dblp{} with a GPU implementation of \stlp{}~\cite{wagner2018semi}.
Due to the $O(n^2)$ space complexity of \stlp{},
we were able to test the baseline only up to 50,000 nodes. Although we used a sparse graph, the Laplacian matrix of such a graph, required for \stlp{}, still exhibits quadratic space complexity, which severely limited scalability.
Figure~\ref{fig:speedupkernel} illustrates the kernel-level speedup of our proposed algorithm relative to the baseline. Initially, our method incurs overhead from the connected component find step, but this cost is quickly amortized as the batch size increases. The primary bottleneck in the baseline lies in its repeated matrix inversion and harmonic solution recomputation for each batch, which dominates its execution time.

When memory transfer time between host and device is also considered in the speedup computation, as shown in Figure~\ref{fig:speedupkernelmem}, the performance gap widens further. Our algorithm employs a CSR-based data structure, which is highly efficient for sparse graphs, while the baseline implementation constructs the Laplacian matrix without exploiting the benefits of sparsity, resulting in significantly higher memory and computational costs.

\begin{figure}[htbp]
  \centering
  \begin{subfigure}[b]{0.48\linewidth}
    \centering
    \includegraphics[width=\linewidth]{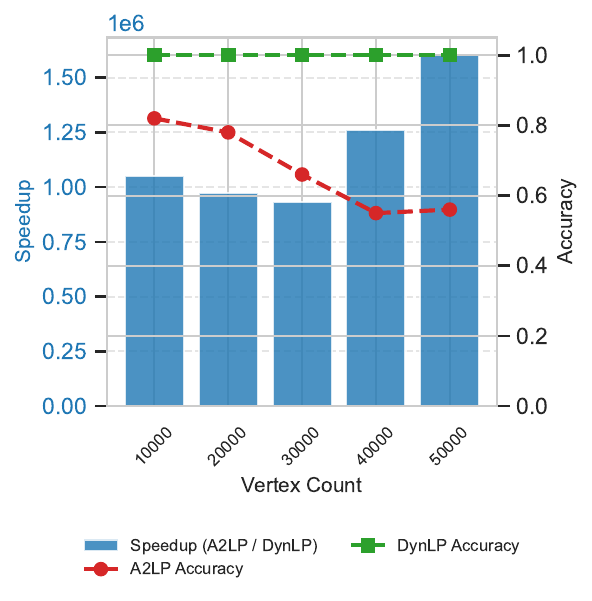}
    \caption{Comparison with A2LP}
    \label{fig:A2LB vs DBLB on Random Graph}
  \end{subfigure}
  \hfill
  \begin{subfigure}[b]{0.45\linewidth}
    \centering
    \includegraphics[width=\linewidth]{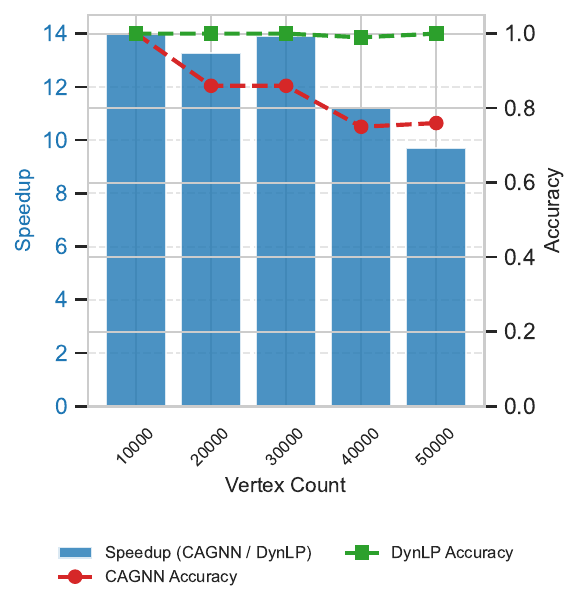}
    \caption{Comparison with CAGNN}
    \label{fig:CAGNN vs DBLB on IMDB Graph}
  \end{subfigure}
  \caption{Performance comparison with machine learning methods}
  \vspace{-0.1in}
\end{figure}

\textbf{Comparison with machine learning-based approaches: }

As A2LP is a convolutional neural network-based approach, it is best suited to image datasets. We use 50,000 ImageNet samples, modeled as vertices, to evaluate both \dblp{} and A2LP. From each class, 1,000 nodes are randomly selected as labeled ground truth, and the remaining nodes are divided into batches of approximately 10,000 samples for incremental updates. Figure~\ref{fig:A2LB vs DBLB on Random Graph} shows that \dblp{} achieves, on average, a $10^6\times$ speedup over A2LP. In terms of accuracy, A2LP reaches $80\%$ and its accuracy decreases as the total number of vertices increases.
We also compare \dblp{} with CAGNN, a two-layer Graph Convolutional Network (GCN) architecture configured with \texttt{SVD\_DIM}=512, \texttt{HIDDEN\_DIM}=256, and \texttt{OUT\_DIM}=2. CAGNN is more scalable than A2LP, and in addition to ImageNet and IMDB, it also runs on larger datasets such as Yelp. Figure~\ref{fig:CAGNN vs DBLB on IMDB Graph} shows the speedup of \dblp{} over CAGNN on IMDB data and compares accuracy. We observe that \dblp{} achieves up to a $14\times$ speedup. Compared to \dblp{}, CAGNN achieves $100\%$ accuracy when the total vertex count is small; however, its accuracy decreases as the number of vertices increases.

\begin{table}[h!]
\caption{Execution time comparison across datasets.}
\label{tab:execution}
\centering
\small
\begin{tabular}{|l|r|r|r|r|}
\hline
\textbf{Dataset} & \textbf{\itlp{}} & \textbf{\stlp{}} & \textbf{\dblp{} } & {CAGNN}  \\ \hline
IMDB & 652.25 & 3,196.41 & 439.21 & 8,545.53\\
Yelp & 18,283.94 & 23,182.12 ($\gamma = 10$) & 3,712.82 & 40,853.31\\
 Household & 678,939.29 & - & 18,232.38 & - \\
 Book & 783,925.09 & - & 19,927.31 & - \\
\hline
\end{tabular}

\end{table}

\vspace{-0.5em}
\textbf{More on execution time: }
Table~\ref{tab:execution} compares execution times of the proposed algorithm with baselines on the IMDB, Yelp, Amazon Household, and Book Review datasets. 
All execution times correspond to processing a single batch with $1\%$ initial ground-truth labels. \dblp{} outperforms all baselines, and the performance gap widens as graph size increases. For \stlp{}, the primary bottleneck is memory, which restricts its execution to the smallest dataset, IMDB. Using the approximation method proposed in \cite{ponte2024computing} with $\gamma = 10$, \stlp{} can also run on Yelp. Here, $\gamma$ is a parameter introduced in \cite{ponte2024computing} to control the trade-off between sparsity and approximation quality of the matrix inverse. A larger $\gamma$ promotes a sparser generalized inverse, but may lead to a poorer approximation. In contrast, a smaller $\gamma$ keeps the solution closer to the Moore-Penrose inverse, at the cost of reduced sparsity and higher memory usage\cite{ponte2024computing}.

\begin{table}[htbp]
\vspace{-0.05in}
\centering
\caption{Performance comparison on random graph with varied batch sizes (T: Execution time, A: Accuracy)}
\begin{tabular}{|c|cc|cc|cc|}
\hline
\textbf{Method} 
& \multicolumn{2}{c|}{\textbf{50K}} 
& \multicolumn{2}{c|}{\textbf{500K}} 
& \multicolumn{2}{c|}{\textbf{5M}} \\
& T(ms) & A & T(ms) & A & T(ms) & A \\
\hline
\itlp{} 
& 1,120 & 100 
& 4,738 & 100 
& 11,929 & 100 \\
\hline
\stlp{} 
& 1,637 & 100 
& -- & -- 
& -- & -- \\
\hline
\stlp{}($\gamma = 0.1$) 
& 5,637 & 72.9 
& -- & -- 
& -- & -- \\
\hline
\stlp{} ($\gamma = 1.0$) 
& 3,989 & 83.5 
& -- & -- 
& -- & -- \\
\hline
\stlp{} ($\gamma = 10.0$) 
& 1,563 & 56.3 
& 5,989 & 54.2 
& 21,637 & 49.5 \\
\hline
CAGNN 
& 7,637 & 100 
& 31,293 & 96 
& 92,838 & 88 \\
\hline
A2LP 
& 7,637 & 82 
& -- & -- 
& -- & -- \\
\hline
\dblp{} 
& 473 & 100 
& 2,128 & 99.3 
& 3,271 & 97.9 \\
\hline
\end{tabular}
\label{tab:epochvsiter}
\end{table}

\vspace{-0.5em}
\textbf{Memory and accuracy:} In our experimental setup, we evaluated different categories of label propagation methods for an exhaustive comparison. However, not all baselines are equally scalable in terms of memory. Due to memory limitations, we varied the single-batch size to highlight their differences. As shown in Table~\ref{tab:epochvsiter}, the \stlp{} method is restricted to a batch size of 50K because of its quadratic memory requirement. This limitation can be partially alleviated by using an approximate matrix inverse, allowing the batch size to scale up to 5M; however, this comes at a significant loss in accuracy.

For A2LP, the limited availability of ground-truth labels prevents the learning model from generalizing effectively, even for batch sizes of 50K. Increasing the batch size further degrades its performance, making it comparable to random binary classification. 

In contrast, the methods that scale well in practice, such as \itlp{} and CAGNN, demonstrate better memory efficiency. Compared to these approaches, our method achieves lower execution time by enabling efficient initialization and avoiding redundant computations, while maintaining accuracy close to the optimal solution.
\color{black}

\textbf{Comparison Summary:}
Table~\ref{tab:summary} presents a comprehensive comparison of speedup, accuracy, and memory trade-offs across all baseline methods. We consider \itlp{} as the reference baseline for accuracy since it optimally minimizes the underlying energy function.
In terms of accuracy, \stlp{} achieve optimal performance and \dblp{}, remains very close to optimal, achieving approximately $99\%$ accuracy on average. However, the approximate variant of \stlp{}($\gamma$) suffers noticeable accuracy degradation.
Machine learning–based methods also show relatively lower accuracy.
In terms of computational performance, the non–machine learning approaches achieve speedups of up to 102$\times$, demonstrating the effectiveness of our connected-component–based initialization and incremental update strategy.
Our approach also significantly outperforms machine learning–based methods.
From a memory perspective, \stlp{} does not exploit graph sparsity and is therefore limited to handling graphs of only up to 50K vertices with an average degree of 5. In contrast, \stlp{}($\gamma$) can process graphs with up to 7M vertices by using approximate matrix inverse techniques. For A2LP and CAGNN, increasing the graph size beyond 50K vertices leads to accuracy dropping close to $50\%$, indicating difficulty in learning even a binary classification task. On the other hand, both \itlp{} and our proposed \dblp{} scale efficiently, processing graphs with up to 50M vertices stored in CSR format with average degrees up to 7.

\begin{table}
\caption{Comparison summary table }
\label{tab:summary}
\centering
\begin{tabular}{c|cc|cc|cc}
\hline
Method 
& \multicolumn{2}{c|}{Accuracy}
& \multicolumn{2}{c|}{Speedup}
& \multicolumn{2}{c}{Max Graph Size} \\
& Avg & Max & Avg & Max & Node & Degree \\
\hline
\itlp{}            & 100 & 100 & 13   & 102  & 50M & 7 \\
\stlp{}            & 100  & 100 & 7    & 39   & 50K & 5  \\
\stlp{}($\gamma$)  & 70  & 83  & 7    & 7    & 7M  & 5  \\
CAGNN              & 76  & 88  & 32   & 32   & 5M & 5  \\
A2LP               & 56  & 58  & 1,935 & 1,935 & 50K & 5  \\
\dblp{}            & 99  & 100 & 1    & 1    & 50M & 7 \\
\hline
\end{tabular}

\end{table}


\section{Conclusion}

We propose \dblp{}, a scalable and efficient framework for label propagation that eliminates redundant computation in dynamic batch updates. 
\dblp{} is designed to exploit graph sparsity to improve both runtime and memory efficiency on large graphs.

We compare \dblp{} with multiple state-of-the-art baselines, covering classical optimization-based techniques and recent learning-based approaches, to quantify the trade-offs among accuracy, speed, and scalability. Across experiments, \dblp{} exhibits near-linear scaling with respect to the number of update batches. 
Our connected component–based initialization in \dblp{} substantially reduces the number of required iterations, yielding further computational savings and faster updates.
Overall, \dblp{} consistently provides better speed and scalability than competing methods while maintaining accuracy close to the optimal solution.

\dblp{} is currently designed for binary classification. In future work, we plan to extend the approach to the multi-class setting.

\bibliographystyle{ACM-Reference-Format}
\bibliography{sample-base}

\end{document}